\documentclass[11pt,fleqn]{article}

\usepackage{geometry}
\geometry{a4paper,left=3cm,right=3cm, top=2.5cm, bottom=2.5cm}
\usepackage{graphicx}
\usepackage{verbatim}
\usepackage[utf8]{inputenc}

\usepackage{amsmath,amsfonts,amsthm,latexsym,amssymb,enumerate,float}
\usepackage{subcaption}

\usepackage[colorlinks = true, citecolor = blue, linkcolor = black, breaklinks = true]{hyperref,xcolor}
\usepackage[overload]{empheq}
\usepackage{breqn}
\captionsetup[subfigure]{singlelinecheck=off,justification=centering}


\usepackage{enumitem}
\usepackage{complexity}

\usepackage{rotating}
\usepackage{marvosym}

\usepackage{tikz}
\usetikzlibrary{arrows.meta, automata, shapes, positioning}
\usepackage{txfonts}

\newtheorem{theorem}{Theorem}
\newtheorem{corollary}[theorem]{Corollary}
\newtheorem{proposition}[theorem]{Proposition}

\newtheorem{definition}[theorem]{Definition}

\usepackage{abstract}
\renewcommand{\abstractname}{}    

\renewenvironment{abstract}
{\quotation\small\noindent\rule{\linewidth}{.5pt}\par
	{\centering\bfseries\abstractname\par}
}
{\par\noindent\rule{\linewidth}{.5pt}\endquotation}

\newcommand{\probld}[3]{
	\begin{flushleft}
		\fbox{
			\begin{minipage}{.96\textwidth}
				\noindent {\sc #1}\\
				{\bf Input:} #2\\
				{\bf Question:} #3
			\end{minipage}
		}
	\end{flushleft}
}

\begin{document}

\title{On the Computational Complexity of the Strong Geodetic Recognition Problem\thanks{This study was financed in part by the Coordena\c{c}\~ao de Aperfei\c{c}oamento de Pessoal de N\'ivel Superior - Brasil (CAPES) - Finance Code 001,
					Conselho Nacional de Desenvolvimento Científico e Tecnológico (CNPq), and Fundação de Amparo à Pesquisa do Estado de Minas Gerais (FAPEMIG).}}
\author{Carlos {V.G.C. Lima}$^1$ \hspace{.2cm}
	Vinicius {F. dos Santos}$^2$ \hspace{.2cm}
	João {H.G. Sousa}$^2$ \hspace{.2cm}
	Sebastián {A. Urrutia}$^2$}
\date{}\maketitle\vspace{-1cm}
\begin{center}
	{\small
		$^1$ Centro de Ciências e Tecnologia, Universidade Federal do Cariri, Juazeiro do Norte, Brazil\\ vinicius.lima@ufca.edu.br\\
		$^2$ Departamento de Ciência da Computação, Universidade Federal de Minas Gerais, Belo Horizonte, Brazil\\ $\{$viniciussantos, surrutia$\}$@dcc.ufmg.br, \, joao.gsousa77@gmail.com
	}
\end{center}

\begin{abstract}
	\noindent\textbf{Abstract:} A strong geodetic set of a graph~$G=(V,E)$ is a vertex set~$S \subseteq V(G)$ in which it is possible to cover all the remaining vertices of~$V(G) \setminus S$ by assigning a unique shortest path between each vertex pair of~$S$.
	In the \textsc{Strong Geodetic} problem (SG) a graph~$G$ and a positive integer~$k$ are given as input and one has to decide whether~$G$ has a strong geodetic set of cardinality at most~$k$.
	This problem is known to be \NP-hard for general graphs.
	In this work we introduce the \textsc{Strong Geodetic Recognition} problem (SGR), which consists in determining whether even a given vertex set~$S \subseteq V(G)$ is strong geodetic.
	We demonstrate that this version is \NP-complete.
	We investigate and compare the computational complexity of both decision problems restricted to some graph classes, deriving polynomial-time algorithms, \NP-completeness proofs, and initial parameterized complexity results, including an answer to an open question in the literature for the complexity of SG for chordal graphs.	
\end{abstract}
{\small
	\begin{tabular}{lp{12.5cm}}
		\textbf{Keywords:} & Covering $\cdot$ \NP-completeness $\cdot$ Strong geodetic number $\cdot$ \textsc{Strong Geodetic Recognition}
	\end{tabular}
}

\section{Introduction}
\label{intro}

    Determining efficient ways to cover vertices or edges of a graph gives rise to an important class of graph problems, that includes the classical vertex cover problem, one of Karp’s~21 \NP-complete problems~\cite{VCSurvey, Karp1972, Fellows2018}.
    Other examples of such problems include the covering of the vertices by independent sets (vertex coloring problem) or cliques (clique cover problem).
    
    Denoting by~$V(H)$ the vertex set of a subgraph~$H$ of~$G$, in this paper we consider problems of finding a covering of the vertices of a graph~$G$ by paths, that is, a family~$\mathcal{P}$ of distinct nonempty paths~$P_i$ (not necessarily vertex-disjoint), such that~$\bigcup V(P_i) = V(G)$.
    When the paths of~$\mathcal{P}$ are required to be shortest paths (geodesics), then it is known as the \textit{isometric path cover} problem~\cite{manuel2018strong}.
    
    In~\cite{harary1993geodetic}, the authors introduced the \textsc{Geodetic} problem, which consists of determining the minimum cardinality of a vertex set~$S \subseteq V(G)$ of a graph~$G=(V,E)$, such that every vertex of~$G$ lies on a shortest path between a pair of vertices in~$S$.
    Such a set~$S$ is called a \textit{geodetic set} of~$G$.
    The minimum cardinality~$g(G)$ of a geodetic set is the \textit{geodetic number} of~$G$.
    A \textit{$g$-set} is a geodetic set of minimum cardinality.
    
    In~\cite{atici2002computational} the \textsc{Geodetic} problem was proved to be \NP-hard even for graphs of diameter~2.
    More recently, in~\cite{dourado2010some}, the authors proved that the problem is also \NP-hard when restricted to chordal graphs and to chordal bipartite graphs.
    Moreover, in the same paper the authors achieved exact values concerning the geodetic number of split graphs and give a linear time algorithm for cographs, besides some upper bounds, particularly for unit interval graphs.
    This upper bound has been improved by Ekim et al.~\cite{Ekim2012}, that provide a polynomial-time algorithm for the \textsc{Geodetic} problem.
    In~\cite{ekim2014block}, a polynomial-time algorithm for the \textsc{Geodetic} problem restricted to block-cacti graphs is presented, besides an \NP-hardness proof for cobipartite graphs.
    In~\cite{BUENO201822} the authors show the \NP-hardness for subcubic graphs, that is, graphs of maximum degree~3.
    In~\cite{hernando2005steiner} a comparison between the hull, Steiner, and geodetic numbers of graphs is given.
    Bre\v{s}ar et al.~\cite{BRESAR20085555} determined some exact values and upper bounds for the geodetic number of the Cartesian product of graphs.
    Cao et al.~\cite{Cao2009} presented exact values for the geodetic number of the Cartesian product of cycles.
    
    Other variations of the problem have been proposed in the literature, where we can cite the edge version~\cite{Atici2010}, for oriented graphs~\cite{HUNG20092134}, and the connected geodetic number~\cite{SANTHAKUMARAN20091571}.
	In this paper we study another variation
    defined by Manuel et al.~\cite{manuel2018strong}, where a unique shortest path between each vertex pair of~$S \subseteq V(G)$ is assigned to cover the vertices of a graph~$G$.
    
    More formally, let~$G=(V,E)$ be a simple undirected graph.
    For~$u, v \in V$, we denote~$P(u,v)$ as the set containing all shortest paths between~$u$ and~$v$ in~$G$.
    For a set~$S \subseteq V$, let~$U_S$ be the set of all distinct vertex pairs of~$S$.
    We say that~$I(S)$ is a \textit{shortest path assignment of~$S$} if
	    	\begin{dmath} \label{eq:IS}
	    		I(S) = \Bigl\{P_1, P_2, \dotsc, P_{|U_S|} \, { \mid } \, {\Big(P_i \in P(u_i,v_i) \wedge P_j \in P(u_j,v_j) \Longleftrightarrow (u_i \neq u_j) \vee (v_i \neq v_j) \Big),}\\
	    		{\textnormal{for all } (u_i,v_i), (u_j, v_j) \in U_S \textnormal{ and } i \neq j \Bigr\},}
	    	\end{dmath}
    that is, $I(S)$ is a shortest path assignment for~$S$ if it contains a unique shortest path between~$u$ and~$v$, for each pair of distinct vertices~$(u,v)$ of~$S$.

     A vertex set~$S$ is a \textit{strong geodetic set} of~$G$ if there exists a shortest path assignment~$I(S)$ of~$S$, such that
        $\bigcup_{P \in I(S)}{V(P)} = V(G)$.
    An \textit{sg-set} is a strong geodetic set of minimum cardinality.
    We denote by~$sg(G)$ the cardinality of an \textit{sg}-set of a given graph~$G$, that is, the \textit{strong geodetic number} of~$G$.
    
    As observed by Manuel et al.~\cite{manuel2018strong}, since every strong geodetic set is geodetic as well, it follows that~$sg(G) \geq g(G)$, for every graph~$G$.
    Moreover, the equality holds for \textit{geodetic graphs}, that is, graphs where the shortest path between any two vertices is unique.
    The family of geodetic graphs~\cite{Blokhuis88,nebesky2002new,Plesnik84} includes, for example, block graphs~\cite{BEHTOEI2010219}, that are equivalent to diamond-free chordal graphs.
    Then, it follows by~\cite{ekim2014block} that~$sg(G)$ can be determined in polynomial time for block graphs.
    However, the gap between the two parameters can be arbitrarily large, as depicted in Figure~\ref{fig:K_2n}, which shows that~$g(K_{2,n} = 2$ and~$sg(K_{2,n} = n$, for every~$n \geq 3$.
    In order to prove this, we first emphasize that it is not hard to see that at least two vertices must be in any $s$-set or $sg$-set for graphs with at least two vertices.
    In the complete bipartite graph~$G=K_{2,n}$, $n\geq 3$, it follows that the~$g$-set of~$G$ is unique and formed by the two vertices of the smallest part of its bipartition (vertices~$u$ and~$w$ in Figure~\ref{fig:gSetK_2}), which implies that~$g(G) = 2$.
    On the other hand, every vertex~$v_i$ in Figure~\ref{fig:sgSetK_2} must be included in any strong geodetic set~$S$ of~$G$, otherwise~$u$ and~$w$ must be in~$S$, and then the $u,w$-geodesic in~$I(S)$ must contain~$v_i$.
    In this way, each~$v_j$, $j \neq i$, must be in~$S$.
    We can obtain a strong geodetic set~$S'$ from~$S$ by removing~$u$ and~$w$ and adding~$v_i$ (see Figure~\ref{fig:sgSetK_2}).
    It is not hard to see that~$S'$ is an~$sg$-set of~$G$ (see Corollary 2.3 of~\cite{multipartite}).
    
    \begin{figure}[b!]
	\centering
	\begin{subfigure}[b]{0.35\textwidth}
	    \centering
    	\begin{tikzpicture}
    	
    	    \tikzset{node/.style={draw, fill=white, circle, thick, fill = white, auto=left,
            inner sep=2pt}}
            
    	    \node[node] (u) at (0,0) [fill=gray, label=left:$u$] {};
    	    \node[node] (v1) at (-2,-1) [label=left:$v_1$] {};
    	    \node[node] (v2) at (-0.75,-1) [label=left:$v_2$] {};
		    \node[circle, thick, fill = white, auto=left, inner sep=2pt] (ret) at (0,-1) {$\cdots$};
	    	\node[node] (v4) at (0.75,-1) [label=right:$v_{n-1}$] {};
    		\node[node] (v5) at (2,-1) [label=right:$v_n$]{};
    	    \node[node] (w) at (0,-2) [fill=gray, label=left:$w$] {};
    	    
    	    \foreach \source/\dest in {u/v1, u/v2, u/v4, u/v5,
                                        v1/w, v2/w, v4/w, v5/w}
                \path (\source) edge[thick] node {} (\dest);
		    
	    \end{tikzpicture}
	    \caption{A $g$-set for $K_{2,n}$.}
	    \label{fig:gSetK_2}
	\end{subfigure} \qquad \qquad
	~
	\begin{subfigure}[b]{0.35\textwidth}
	    \centering
		\begin{tikzpicture}[level/.style={sibling distance=20mm/#1}]
		
		    \tikzset{node/.style={draw, fill=white, circle, thick, fill = white, auto=left,
            inner sep=2pt}}
            
    	    \node[node] (u) at (0,0) [label=left:$u$] {};
    	    \node[node] (v1) at (-2,-1) [fill=gray, label=left:$v_1$] {};
    	    \node[node] (v2) at (-0.75,-1) [fill=gray, label=left:$v_2$] {};
		    \node[circle, thick, fill = white, auto=left, inner sep=2pt] (ret) at (0,-1) {$\cdots$};
	    	\node[node] (v4) at (0.75,-1) [fill=gray, label=right:$v_{n-1}$] {};
    		\node[node] (v5) at (2,-1) [fill=gray, label=right:$v_n$]{};
    	    \node[node] (w) at (0,-2) [label=left:$w$] {};
    	    
    	    \foreach \source/\dest in {u/v1, u/v2, u/v4, u/v5,
                                        v1/w, v2/w, v4/w, v5/w}
                \path (\source) edge[thick] node {} (\dest);
	        
        \end{tikzpicture}
		\caption{An $sg$-set for $K_{2,n}$.}
		\label{fig:sgSetK_2}
	\end{subfigure}
	\caption{In Figure~\ref{fig:gSetK_2} and Figure~\ref{fig:sgSetK_2} the gray vertices represent the $g$-set and the $sg$-set of~$K_{2,n}$, respectively.}
	\label{fig:K_2n}
\end{figure}
    
    We consider the corresponding decision problem in this work.

    \vspace{-.1cm}
    \probld{Strong Geodetic (SG)}
    {A finite, simple, and undirected graph~$G$ and a positive integer $k$.}
    {Is there a strong geodetic set~$S$ of~$G$ with~$|S|~\leq k$?}
	\vspace{-.1cm}
	
	Manuel et al.~\cite{manuel2018strong} proved that SG is \NP-hard for general graphs and derived a closed formula for~$sg(G)$ on Apollonian networks~$G$.
    Afterwards, the strong geodetic number was studied for grid-like architectures and Cartesian product graphs~\cite{gledel2018strong,refId0,Klavzar2018}, where are given closed formulas to solve the problem for some restricted graph classes.
    In~\cite{Wang2019} is given an upper bound in terms of the connectivity.
    Balanced complete bipartite graphs was studied in~\cite{irvsivc2018strong} and a lower bound in terms of the diameter is given.
    The authors achieved a quadratic algorithm solving~SG for complete bipartite graphs, derived results for complete multipartite graphs, and proved the \NP-hardness of SG for (general) bipartite graphs and complete multipartite graphs~\cite{multipartite}.
    In~\cite{gledel2018strong2}, the strong geodetic number (the exact value) for complete bipartite graphs and crown graphs was determined, related results for hypercubes were also presented.
    Furthermore, the concept of \textit{strong geodetic cores} has been introduced in~\cite{gledel2018strong} and stronger results concerning the Cartesian product of graphs were derived.
    
    As stated before, in~\cite{manuel2018strong} the authors claim that~SG is in fact \NP-complete for general graphs, but actually they do not prove that~SG $\in$ \NP, so their result implies only its hardness.
    Note that a possible certificate for a \textbf{YES} instance~$(G, k)$ of~SG could be given by a family~$\mathcal{P}$ of~$\binom{k}{2}$ paths of~$G$.
    We can verify that~$\mathcal{P}$ is in fact a valid certificate for~SG by showing that each~$P \in \mathcal{P}$ is a geodesic between its endvertices, $\bigcup_{P \in \mathcal{P}}{V(P)} = V(G)$, the ordered pairs defined by the set of endvertices of the paths of~$\mathcal{P}$ are all distinct, and the union of such pairs is a vertex set~$S$ of cardinality at most~$k$.
    The set~$S$ is then a strongly geodetic set of~$G$ of size at most~$k$.
    Obviously, verifying this certificate can be done in polynomial time on the size of~$(G, k)$, which implies that~SG$ \in$ \NP~and, by the reduction of Manuel et al.~\cite{manuel2018strong}, it is \NP-complete for general graphs.
    
    On the other side, a certificate for a \textbf{YES} instance~$(G, k)$ of the decision version of the {\sc Geodetic} problem can be given by just a set~$S \subseteq V(G)$, where we can easily verify in polynomial time whether~$S$ is a geodetic set of~$G$ of size at most~$k$.
    So, a natural question arises, that is, what is the complexity of deciding whether a given vertex set is a solution for~SG?
    Hence, we introduce the following decision problem.
    \vspace{-.1cm}
    \probld{Strong Geodetic Recognition (SGR)}
    {A finite, simple, and undirected graph~$G=(V,E)$ and a vertex set~$S \subseteq V(G)$.}
    {Is~$S$ a strong geodetic set of~$G$?}
    \vspace{-.1cm}
    
    In other words, SGR asks whether there exists a path assignment~$I(S)$, as defined in Equation~(\ref{eq:IS}), such that~$\bigcup_{P \in I(S)}{V(P)} = V(G)$.
    
    In this paper we prove the \NP-completeness of SGR and deal with the computational complexity of SGR and SG on some graph classes.
    These results also illustrate that, unlike the classical \textsc{Geodetic} problem, in which geodetic sets can be recognized in polynomial-time (using breadth-first search), the same is not true for strong geodetic sets, unless~$\P=\NP$.\\

	\noindent\textbf{Our results and organization of the paper.} 
    In Section~\ref{sec:Defs}, we introduce additional notation and definitions used in the text.
    We also present some initial considerations about~SG and~SGR.
    
    In Section~\ref{sec:NP-complt}, we prove that~SGR is \NP-complete even for bipartite graphs of bounded diameter and also for bipartite graphs of bounded degree, improving the result of~\cite{multipartite}, that states the \NP-hardness of~SG for general bipartite graphs.
    
    In Section~\ref{sec:co_np}, we prove that SG is \NP-complete for co-bipartite graphs of diameter~2. 
    We also show that both SG and SGR parameterized simultaneously by the diameter and the cardinality of the strong geodetic set are fixed parameter tractable (FPT).
    This result contrasts with the hardness results of Section~\ref{sec:NP-complt} regarding the complexity of SGR parameterized by the max-degree and diameter simultaneously.
        
    In Section~\ref{sec:chordal} we prove that~SG is also NP-complete for chordal graphs of diameter~2, solving an open question posed by Manuel et al.~\cite{manuel2018strong}.
    
    In Section~\ref{sec:polynomial} we present some positive results on solving~SGR for split graphs and for graphs of diameter~2.
    This latter one elucidates the contrast between the complexity of SG (NP-complete) and SGR (polynomial-time solvable).
    Some polynomial-time algorithms for block and cacti graphs are also provided.
    
    We conclude the paper discussing some further research directions in Section~\ref{sec:conclusions}.
\section{Definitions, Notations, and Preliminaries}
\label{sec:Defs}
    
    \subsection{Definitions and Notations}
    
        For a positive integer~$k$, let~$[k] = \{1, 2, \dotsc, n\}$.
        
        In this paper we will only consider simple, connected, and undirected graphs.
        For a graph~$G=(V,E)$ and vertices~t~$u, v \in V(G)$, let~$n = |V(G)|$ and~$m = |E(G)|$.
        We also define~$d(u,v)$ as the \textit{distance} between~$u$ and~$v$, that is, the number of edges in a shortest path (or \textit{geodesic}) between~$u$ and~$v$.
        We will use~$u,v$-\textit{shortest path} to refer to any shortest path between~$u$ and~$v$.
        The \textit{diameter} of~$G$ is the greatest distance between the vertices in~$V(G)$.
        We will denote it as~$diam(G)$.
        
        For a set~$U\subseteq V$, we denote~$G[U]$ as the subgraph of~$G$ \textit{induced by}~$U$.
        We also denote~$N(v)$ as the \textit{neighborhood} (or \textit{open neighborhood}) of~$v$ and~$N[v] = N(v) \cup \{v\}$ as the \textit{closed neighborhood} of~$V$.
        A \textit{simplicial vertex} is one whose neighborhood induces a \textit{clique}, a set of pairwise adjacent vertices.
        The \textit{degree}~$d(v)$ of a vertex~$v$ is the cardinality of its neighborhood.
        Let~$\Delta(G)$ be the maximum degree of the vertices of~$G$.
        
        A \textit{connected component} is a maximal connected induced subgraph of~$G$.
        A vertex~$v$ is a \textit{cut-vertex} of~$G$, if~$G-v$ has more connected components than~$G$.
        A \textit{biconnected subgraph} is one that has no cut-vertices.
        A \textit{biconnected component} is a maximal biconnected subgraph of~$G$.
        Let~$T(u, v)$ be the \textit{interval between~$u$ and~$v$}, that is, the set of vertices belonging to all $u,v$-shortest paths.
        For a set~$S \in V(G)$, let~$T(S) = \bigcup_{u, v \in S} T(u,v)$.
    
        We also use some parameterized complexity concepts.
        See~\cite{downey2012parameterized,DF13,CyganFKLMPPS15} for a complete reference on the subject.
        A parameter is any metric associated with a problem's instance, for example the diameter and the maximum degree.
        A problem~$\Pi$ is \textit{fixed-parameter tractable}, or~\FPT, under the parameter~$k$ if it can be solved by an algorithm~$A$ whose time complexity can be expressed as~$\mathcal{O}\left(f(k) \cdot n^c\right)$, with~$n$ being the size of the input instance, $c$ a positive constant, and~$f(k)$ a computable function.
        The size of the instance includes the size of the parameter.
        In this case~$A$ is called an~\FPT~\textit{algorithm} for~$\Pi$.
        A problem is said to be in~\XP~for a parameter~$k$, if it can be solved in polynomial time when~$k$ is fixed (treated as a constant), that is, it there exists an algorithm for it whose complexity is as~$\mathcal{O}\left(f(k) \cdot n^{g(k)}\right)$, where~$f$ and~$g$ are computable functions.
        
        Another important concept in parameterized complexity is that of \emph{kernelization}.
        A \textit{kernelization algorithm}, or just \emph{kernel}, for a problem~$\Pi$ takes an instance~$I$ and parameter~$k$ and, in time polynomial in~$|I| + k$, outputs an instance~$I'$ with parameter~$k'$, such that~$|I'|, k' \leqslant g(k)$ for some function~$g$.
        Moreover, $(I,k)$ is a \textbf{YES} instance of~$\Pi$ if and only if~$(I',k')$ is a \textbf{YES} instance of~$\Pi$ too.
        The function~$g$ is called the \emph{size} of the kernel and represents a measure of the ``compressibility'' of a problem using polynomial-time preprocessing rules.
        A kernel is called \emph{polynomial} (resp. \emph{linear}) if~$g(k)$ is polynomial (resp. linear) in~$k$.
	    It is nowadays a well-known result in the area that a problem is in~\FPT~if and only if it admits a kernelization algorithm.
	    However, the kernel that one obtains in this way is typically of size at least exponential in the parameter.
	    A natural problem in this context is to find polynomial or linear kernels for problems in \FPT.
	    
	    As in polynomial reductions, a \textit{polynomial parameter transformation}~\cite{BodlaenderTY11} can be used for parameterized problems. 
	    Such transformation from a parameterized problem~$\Pi_1$ (with parameter~$k$) to a parameterized problem~$\Pi_2$ (with parameter~$k'$) is an algorithm that, given an instance~$(x,k)$ of~$\Pi_1$, computes in polynomial time an equivalent instance $(x',k')$ of~$\Pi_2$, such that~$k'$ is polynomially bounded depending only on~$k$.
	    
    \subsection{Preliminaries}
        
        In order to further illustrate the relation between the complexities of~SG and SGR, we state the following proposition that reinforces the intuition that~SG is not computationally easier than~SGR.
        It is easy to see that each simplicial vertex must belong to any strong geodetic set of~$G$.
        \begin{proposition}\label{prop:SGRtoSG}
            \textsc{SGR} is polynomially reducible to \textsc{SG}.
        \end{proposition}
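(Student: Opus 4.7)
The plan is to reduce \textsc{SGR} to \textsc{SG} by a gadget that attaches a pendant to each vertex of the given set. Specifically, given an instance $(G, S)$ of \textsc{SGR}, I would build $G'$ from $G$ by adding, for each $s \in S$, a fresh vertex $s'$ adjacent only to $s$, and then output the \textsc{SG} instance $(G', |S|)$. The construction is plainly computable in polynomial time, and since $G$ is connected and $S \neq \emptyset$, the graph $G'$ is connected as well. Let $S' = \{s' : s \in S\}$ denote the set of new pendants.

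The key structural observation is that every $s_i', s_j'$-shortest path in $G'$ has the form $s_i'\, s_i\, P\, s_j\, s_j'$, where $P$ is the interior of some $s_i, s_j$-shortest path in $G$, and conversely every shortest $s_i, s_j$-path in $G$ lifts to such a shortest path in $G'$. Therefore choosing one shortest path per pair in $S'$ is equivalent to choosing one shortest path per pair in $S$, and the interiors of the chosen $G'$-paths, restricted to $V(G)$, are the interiors of the corresponding $G$-paths \emph{together with} $S$ itself (each $s \in S$ becomes an internal vertex of the lifted path whenever $|S| \geq 2$). Hence, assuming $|S| \geq 2$, the set $S'$ is strong geodetic in $G'$ if and only if $S$ is strong geodetic in $G$. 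The degenerate cases $|S| \leq 1$ can be decided directly and handled by mapping to a trivial \textsc{SG} instance.

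With this correspondence in hand, the reduction follows. For the forward direction, if $S$ is strong geodetic in $G$, then $S'$ is a strong geodetic set of $G'$ of size exactly $|S|$. For the backward direction, suppose $G'$ admits a strong geodetic set $T$ with $|T| \leq |S|$. Each pendant $s' \in S'$ is simplicial in $G'$, and as noted in the preliminaries every simplicial vertex must belong to every strong geodetic set; thus $S' \subseteq T$, and since $|T| \leq |S| = |S'|$, we obtain $T = S'$. Invoking the equivalence above yields that $S$ is strong geodetic in $G$.

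The main delicate point is establishing the bijection between interior-vertex sets of $S'$-paths in $G'$ and $S$-paths in $G$; in particular one must argue that the pendant edges $s_i s_i'$ are forced into any $s_i', s_j'$-shortest path and that covering $V(G') \setminus S' = V(G)$ in $G'$ corresponds precisely to covering $V(G) \setminus S$ in $G$ (the extra vertices of $S$ being handled automatically once $|S| \geq 2$). Once this is verified cleanly, the simplicial-vertex argument closes the reduction.
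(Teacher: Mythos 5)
Your proposal is correct and follows essentially the same construction as the paper: attach a pendant vertex to each vertex of $S$, set $k=|S|$, lift geodesics for the forward direction, and use the simplicial-vertex argument to force the pendants into any strong geodetic set of size at most $k$ for the converse. The extra attention you give to the degenerate cases $|S|\leq 1$ and to the vertices of $S$ becoming internal on lifted paths is a minor refinement of the same argument, not a different route.
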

        \begin{proof}
            Let~$\alpha = (G,S)$ be an instance of SGR on a graph~$G = (V, E)$ and~$S \subseteq V(G)$.
            We create an instance~$\beta = (G',k)$ of~SG on the graph~$G' = (V',E')$ and the positive integer~$k$, where~$V' = V(G) \cup \{x_v \mid v \in S\}$, $E' = E(G) \cup \{vx_v \mid v \in S\}$, and~$k = |S|$.
            In other words, we add a pendant vertex~$x_v$ to each vertex~$v$ of~$S$.

            Let~$\alpha$ be a \textbf{YES} instance of SGR.
            Then~$S$ is a strong geodetic set for some shortest path assignment~$I(S)$.
            We state that~$G'$ has a strong geodetic set~$S' = \{x_v \mid v \in S\}$ with a shortest path assignment defined as follows: for each~$u,v$-geodesic~$P \in I(S)$, $I(S')$ contains the path~$(x_u, V(P), x_v)$, which is an $x_u, x_v$-geodesic in~$G'$.
            Therefore~$S'$ is a strong geodetic set of size~$k$ in~$G'$.

            Now, let~$\beta$ be a \textbf{YES} instance of SG.
            Since the~$k$ vertices~$\{x_v \mid v \in S\}$ are simplicial, they compose the strong geodetic set of~$G'$.
            Consequently~$\alpha$ is a \textbf{YES} instance as well, since it is possible to obtain a shortest path assignment~$I(S)$ by adding each~$P \in I(S')$ to~$I(S)$ by removing its endpoints.
        \end{proof}
        
        Proposition~\ref{prop:SGRtoSG} reveals a straightforward manner to solve SGR by solving~SG and also provides a tool to transfer the hardness of~SGR to~SG.
        We can also observe that, when considering the size of~$S$ in~SGR as a parameter, then it is a polynomial parameter transformation as well.
        
        Some upper and lower bounds for the strong geodetic number have been proposed in the literature.
        
\section{\NP-Completeness of \textsc{Strong Geodetic Recognition}}
\label{sec:NP-complt}

    We present a polynomial reduction from an \NP-complete~\cite{schaefer1978complexity} variant of the \textsc{3-SAT} problem, \textsc{3-SAT}$_3$, to SGR.
    An instance of~\textsc{3-SAT}$_3$ consists of a set~$X=\{x_1, x_2, \dotsc, x_n\}$ of variables and a set~$C=\{C_1, C_2, \dotsc, C_m\}$ of clauses, where each clause has~2 or~3 literals (a variable or a negated variable).
    In addition, any variable appears at most~3 times.

    \begin{theorem}\label{strongrec}
        \textsc{Strong Geodetic Recognition} is \NP-complete.
    \end{theorem}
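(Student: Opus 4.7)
The plan is first to establish membership in \NP~and then design a polynomial reduction from \textsc{3-SAT}$_3$ witnessing \NP-hardness. For membership, a certificate is a shortest-path assignment $I(S)$ that assigns to each pair $\{u,v\} \subseteq S$ a single $u,v$-geodesic in $G$; checking that each chosen path has length $d(u,v)$ (computable by BFS) and that the union of their internal vertices covers $V(G)\setminus S$ is clearly polynomial.

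For the hardness part, given a \textsc{3-SAT}$_3$ formula with variables $X$ and clauses $C$, I would build a graph $G$ and a set $S \subseteq V(G)$ encoding the formula as follows. For each variable $x_i$, I introduce a \emph{variable gadget} consisting of two endpoints $a_i,b_i \in S$ joined by two internally vertex-disjoint paths of equal length: a ``true'' path whose internal vertices represent the literal $x_i$ and a ``false'' path representing $\bar x_i$. Any $a_i,b_i$-geodesic therefore uses exactly one of these paths, and this choice encodes the truth value assigned to $x_i$. For each clause $C_j$, I introduce a \emph{clause gadget} containing a vertex $c_j$ (possibly together with auxiliary attachment vertices) that can only lie on a shortest path between two elements of $S$ when that path traverses the gadget through a literal occurring in $C_j$. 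All gadgets are glued to a common backbone so that distances between every pair of vertices of $S$ are fully controlled and no geodesic accidentally takes an unintended shortcut.

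I would then argue correctness in two directions. Given a satisfying assignment $\phi$, I would produce an explicit assignment $I(S)$ by choosing, for each variable gadget, the geodesic dictated by $\phi(x_i)$, and, for each clause $C_j$, a pair of $S$-vertices whose unique shortest path is forced to visit $c_j$ via a literal made true by $\phi$; the remaining pairs in $S$ provide enough geodesic freedom to cover any residual vertices, in particular the ``false'' halves of the variable gadgets. Conversely, from any shortest-path assignment certifying that $S$ is strong geodetic, the structure of each variable gadget forces a single consistent literal per variable, inducing a truth assignment, and the requirement that every $c_j$ be internally covered forces each clause to contain a true literal under that assignment.

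The main obstacle will be gadget design. I must ensure simultaneously that (i) the two paths of a variable gadget are the only $a_i,b_i$-geodesics, so the truth-value encoding is unambiguous; (ii) the backbone distances never allow a geodesic between two $S$-vertices to bypass a variable gadget or to visit a $c_j$ without passing through a true literal; and (iii) the ``unused'' half of each variable gadget is always coverable by some other geodesic in $I(S)$, for otherwise a formula could fail to yield a valid assignment even when satisfiable. The $\leqslant 3$ bound on variable occurrences in \textsc{3-SAT}$_3$ will be crucial, since it bounds how many clause gadgets a single variable gadget must serve and thus keeps the residual-coverage combinatorics manageable.
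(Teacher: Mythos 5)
Your NP-membership argument and the overall plan (reduce from \textsc{3-SAT}$_3$, variable gadget with two parallel equal-length paths, clause vertices attached to literal occurrences, a backbone controlling distances) are in the right spirit, but the proof as written stops exactly where the difficulty lies: the gadget satisfying your conditions (i)--(iii) is never constructed, and the encoding you do commit to has an internal tension that would break the converse direction if implemented naively. You let the truth value of $x_i$ be the side chosen by the single $a_i,b_i$-geodesic, and you then require (iii) that the unused half be coverable by geodesics of \emph{other} pairs. But if other pairs' geodesics are free to run through the unused literal path, nothing prevents them from also reaching a clause vertex attached to that path, i.e.\ covering $c_j$ through a \emph{false} literal; then extracting a satisfying assignment from an arbitrary valid $I(S)$ fails. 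There is also a counting problem: a literal may occur in two clauses, so the chosen side may need to host two distinct clause-covering geodesics, which one pair $\{a_i,b_i\}$ (one geodesic) cannot supply, and you never say which pairs of $S$ the clause-covering paths are between.

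The paper resolves exactly these points with a tighter design. Each variable gadget has \emph{two} terminals $p_i,q_i\in S$ and there is one extra hub $z\in S$ adjacent to all clause vertices; the clause vertices can only be covered by geodesics from $P\cup Q$ to $z$, the two geodesics $(p_i,z)$ and $(q_i,z)$ enter the gadget from opposite ends and can thus serve up to two clauses through the same literal side, while the single $p_i,q_i$-geodesic covers the \emph{entire} unused side and, being confined to the gadget, can never touch a clause vertex. All remaining pairs among $P\cup Q$ are short-circuited by new length-$2$ paths through fresh degree-$2$ vertices, so their geodesics cannot wander into any gadget --- this is what makes your condition (ii) hold. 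Consistency of the truth assignment is then forced not by fiat but by coverage: the four outer literal vertices of a gadget can only be covered from inside, and since the $p_i,q_i$-geodesic covers one side, the two $z$-geodesics must both use the other side, which defines the variable's value and ties clause coverage to true literals. Without some mechanism of this kind (separate coverage of the unused side by an intra-gadget pair, a shared hub in $S$ for clause coverage, and distance control isolating inter-gadget pairs), your sketch does not yet yield a correct reduction.
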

    \begin{proof}
        Let~$G=(V,E)$ and~$S \subseteq V$ compose an instance of SGR.
        The problem is clearly in~\NP, since we can use a shortest path assignment as a certificate to verify in polynomial-time whether all vertices of~$G$ are covered by the specified paths and that each pair $u,v \in S$ has exactly one valid $u,v$-shortest path in that assignment.

        Now, we present a polynomial reduction from~\textsc{3-SAT}$_3$ to SGR.
        Let~$X=\{x_1, x_2, \dotsc, x_n\}$ be the set of variables and~$C=\{c_1, c_2, \dotsc, c_m\}$ be the set of clauses of a~\textsc{3-SAT}$_3$ instance.
        We assume that each variable appears~2 or~3 times in~$C$ and, also, that every variable appears at least once on its positive form and once on its negative form. Otherwise, let $x_i$ be a variable that only appears either on a positive or negative form, we can construct an equivalent instance by removing $x_i$ and the clauses it appears by setting it as true or false, respectively. Given that, each literal can satisfy at most 2 clauses.

        Now we construct an equivalent instance of \textsc{Strong Geodetic Recognition} on a graph $G=(V,E)$ defined as follows (Figure~\ref{fig:SGR} depicts an example of the construction). For each variable~$x_i \in X$ add a gadget containing~8 vertices (variable gadget): $x_i$, $x_i'$, $\overline{x_i}$, $\overline{x_i}'$, $w_i$, $\overline{w_i}$, $p_i$, and~$q_i$.
        Then add the edges~$x_i w_i$, $w_i x_i'$, $\overline{x_i} \overline{w_i}$, $\overline{w_i} \overline{x_i}'$, $q_i \overline{x_i}'$, $q_i x_i'$, $p_i x_i$, and~$p_i \overline{x_i}$.
        
        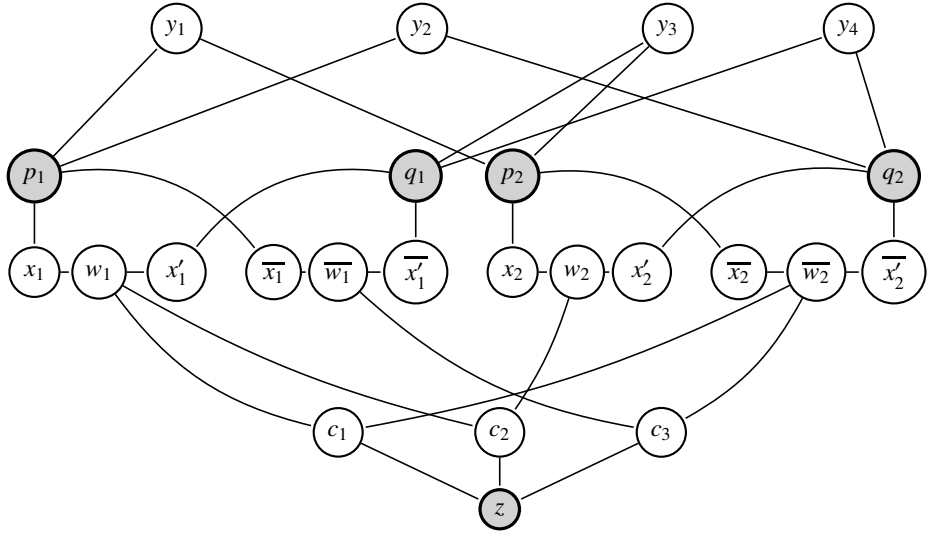
\begin{figure*}[t]
            \centering
            \begin{tikzpicture}[scale=0.85, transform shape,
            > = stealth, 
            shorten > = 1pt, 
            auto,
            node distance = 1.5cm, 
            semithick 
        ]

        \tikzstyle{every state}=[
            draw = black,
            thick,
            fill = white,
            minimum size = 5mm
        ]
		
        \node[state] (p) [very thick,fill = gray!35]{$p_1$};
        \node[state] (a1) [below of=p] {$x_1$};
        \node[state] (w1) [right of=a1, node distance = 1cm] {$w_1$};
        \node[state] (a2) [right of=w1, node distance = 1.2cm] {$x_1'$};
        
        \node[state] (a3) [right of=a2] {$\overline{x_1}$};
        \node[state] (w2) [right of=a3, node distance = 1cm] {$\overline{w_1}$};
        \node[state] (a4) [right of=w2, node distance = 1.2cm] {$\overline{x_1'}$};
        
        \node[state] (q) [very thick,fill = gray!35] [above of=a4] {$q_1$};

        \node[state] (p2) [very thick,right of=q,fill = gray!35]{$p_2$};
        \node[state] (x1) [below of=p2] {$x_2$};
        \node[state] (y1) [right of=x1, node distance = 1cm] {$w_2$};
        \node[state] (x2) [right of=y1, node distance = 1cm] {$x_2'$};
        
        \node[state] (x3) [right of=x2] {$\overline{x_2}$};
        \node[state] (y2) [right of=x3, node distance = 1.2cm] {$\overline{w_2}$};
        \node[state] (x4) [right of=y2, node distance = 1.2cm] {$\overline{x_2'}$};
        
        \node[state] (q2) [very thick,fill = gray!35] [above of=x4] {$q_2$};

        \node[state] (c1) [below of=w2, node distance = 2.5cm] {$c_1$};
        \node[state] (c2) [right of=c1, node distance = 2.5cm] {$c_2$};
        \node[state] (c3) [right of=c2, node distance = 2.5cm] {$c_3$};
        
        \node[state] (z) [very thick ,fill = gray!35] [below of=c2, node distance = 1.2cm] {$z$};
        
        \node[state] (u1) [above of=a2, node distance = 3.8cm] {$y_1$};
        \node[state] (u2) [right of=u1, node distance = 3.8cm] {$y_2$};
        \node[state] (u3) [right of=u2, node distance = 3.8cm] {$y_3$};
        \node[state] (u4) [right of=u3, node distance = 2.8cm] {$y_4$};

        \path[-] (p) edge (a1);
        \path[-] (p) edge [bend left] (a3);
        \path[-] (q) edge [bend right] (a2);
        \path[-] (q) edge (a4);
        
        \path[-] (a1) edge (w1);
        \path[-] (w1) edge (a2);
        \path[-] (a3) edge (w2);
        \path[-] (w2) edge (a4);

        \path[-] (p2) edge (x1);
        \path[-] (p2) edge [bend left] (x3);
        \path[-] (q2) edge [bend right] (x2);
        \path[-] (q2) edge (x4);
        
        \path[-] (x1) edge (y1);
        \path[-] (y1) edge (x2);
        \path[-] (x3) edge (y2);
        \path[-] (y2) edge (x4);

        
        \path[-] (w1) edge [bend right = 20] (c1);
        \path[-] (w1) edge [bend right = 8] (c2);
        \path[-] (w2) edge [bend right = 16] (c3);
        
        \path[-] (y1) edge [bend left = 8] (c2);
        \path[-] (y2) edge [bend left = 8] (c1);
        \path[-] (y2) edge [bend left = 16] (c3);

        \path[-] (c1) edge (z);
        \path[-] (c2) edge (z);
        \path[-] (c3) edge (z);
        
        \path[-] (p) edge (u1);
        \path[-] (p) edge (u2);
        \path[-] (q) edge (u3);
        \path[-] (q) edge (u4);
        \path[-] (u1) edge (p2);
        \path[-] (u2) edge (q2);
        \path[-] (u3) edge (p2);
        \path[-] (u4) edge (q2);
        

    \end{tikzpicture}
            \caption{An instance of SGR arising from an instance of \textsc{3-SAT}$_3$: $X=\{x_1,x_2\}$, $C=\{c_1,c_2,c_3\}$, with $c_1 = (x_1,\overline{x_2})$, $ c_2 = (x_1,x_2)$ and $c_3 = (\overline{x_1},\overline{x_2})$. The vertices marked in gray belong to $S$.}
            \label{fig:SGR}
        \end{figure*}

        For each clause~$c_i \in C$ add a vertex~$c_i$.
        Moreover, add a vertex~$z$ adjacent to all vertices~$c_i \in C$.
        Now, add the edges that represent the relation between variables and clauses as follows: Let~$c_i \in C$ be a clause, then, for each positive literal~$x_i \in c_i$, add the edge~$c_i w_i$, and, for each negative literal~$\overline{x_i} \in c_i$, add the edge~$c_i \overline{w_i}$.
        Repeat this procedure for all clauses in~$C$.

        Finally, for every pair of vertices such as~$(p_i, p_j)$, $(p_i, q_j)$, and~$(q_i, q_j)$, with $i \neq j$, add a new vertex~$y$, an edge between the first vertex of the pair and~$y$ and an edge between~$y$ and the second vertex of the pair.
        Thus, creating a path of size~2 between each pair of vertices as described.
        Hence we obtain~$G$.
        
        Let~$P=\{p_i \mid i \in [n]\}$, $Q=\{q_i \mid i \in [n]\}$, $W=\{w_i, \overline{w_i} \mid i \in [n]\}$, and~$S= P \cup Q \cup \{z\}$.
        The constructed instance consists in deciding whether~$S$ is a strong geodetic set of~$G$.

        Now we prove that if the instance of~\textsc{3-SAT}$_3$ is satisfiable, then~$S$ is a strong geodetic set of~$G$.
        Let~$T$ be a truth assignment of~$X$ satisfying all clauses of~$C$.
        At first, note that the length of a shortest path from~$z$ and a vertex in~$P \cup Q$ is~4.
        So, if~$x_i$ is set to true at $T$, then assign~$(p_i,x_i,w_i,c,z)$, a~$p_i,z$-shortest path, and~$(q_i,x_i',w_i,c',z)$, a~$q_i, z$-shortest path, with~$c$ and~$c'$ denoting the clauses that~$x_i$ satisfies when set to \textit{true}.
        Observe that any literal satisfies either one or two clauses.
        Thus, if two clauses are satisfied, then~$c \neq c'$, otherwise, $c = c'$.
        Now, assign the shortest path~$(p_i,\overline{x_i},\overline{w_i},\overline{x_i}',q_i)$ between~$p_i$ and~$q_i$.
        Note that~$D(p_i,q_i) = 4$.

        If~$x_i$ is set to false in~$T$, then the paths will be chosen on an analogous way.
        We will choose the paths~$(p_i,\overline{x_i},\overline{w_i},c,z)$, $(q_i,\overline{x_i}',\overline{w_i},c',z)$, and~$(p_i,x_i,w_i,x_i',q_i)$.
        By this time, all vertices in variable gadgets and all clause vertices are covered.
        This holds because the vertices~$w_i$ and~$\overline{w_i}$ are adjacent to all clause vertices, each one satisfied, and it is possible to cover these clause vertices with~$(p_i,z)$ and~$(q_i,z)$-shortest paths.
        It remains to define the paths between vertices in~$S \setminus \{z\}$ that are in different variable gadgets.
        We assign to~$I(S)$ the unique length~2 shortest path between such vertices.
        Finally, note that all vertices are covered, hence, $S$ is a strong geodetic set of~$G$.

        Now, assume that~$S$ is a strong geodetic set of~$G$. Consider the variable $x_i \in X$ and observe that one of the following options holds:
        \begin{itemize}
            \item The~$p_i,z$-shortest path passes through~$x_i$ and the~$q_i,z$-shortest path passes through~$x_i'$.
            \item The~$p_i,z$-shortest path passes through~$\overline{x_i}$ and the~$q_i,z$-shortest path passes through~$\overline{x_i}'$.
        \end{itemize}
        This claim holds because, otherwise, one of the vertices in~$\{x_i,x_i',\overline{x_i},\overline{x_i}'\}$ would not be covered, since a~$p_i,q_i$-shortest path can cover either~$x_i$ and~$x_i'$ or~$\overline{x_i}$ and~$\overline{x_i}'$.

        Therefore, the variable gadget forces a choice between either a positive or a negative literal.
        It is important to note that only shortest paths between~$z$ and a vertex in~$P \cup Q$ are able to cover clause vertices.
        Now, consider the following truth assignment for~$X$: for each~$x_i \in X$, if~$I(S)$ assigns the~$(p_i,z)$ and~$(q_i,z)$-shortest paths to pass through~$x_i$ and~$x_i'$, then set~$x_i$ to true, otherwise, set~$x_i$ to false.
        This constructed truth assignment satisfies all clauses, since~$S$ is a strong geodetic set of~$G$, which must cover all clause vertices.
        Hence, the~\textsc{3-SAT}$_3$ instance is satisfiable and the proof is concluded.
    \end{proof}

    \begin{corollary}\label{bipartido}
        \textsc{Strong Geodetic Recognition} is \NP-complete even when restricted to bipartite graphs with diameter bounded by~6.
    \end{corollary}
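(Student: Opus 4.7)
The plan is to observe that the graph $G$ constructed in the proof of Theorem~\ref{strongrec} is itself bipartite and has diameter at most $6$. Hence the corollary follows directly from that theorem, with no further gadgetry required; only verification of these two structural properties is needed.

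For bipartiteness I would exhibit the explicit $2$-coloring in which one class $A$ contains $\{p_i, q_i, w_i, \overline{w_i} : i \in [n]\} \cup \{z\}$ and the other class $B$ contains $\{x_i, x_i', \overline{x_i}, \overline{x_i}' : i \in [n]\} \cup \{c_j : j \in [m]\}$ together with every subdivision vertex $y$ introduced on the length-$2$ paths between pairs in $P \cup Q$. A direct inspection of each edge type produced by the construction (the eight edges inside each variable gadget, the clause--literal edges $c_j w_i$ and $c_j \overline{w_i}$, the edges $c_j z$, and the two edges incident to each subdivision vertex) confirms that every edge has one endpoint in $A$ and one in $B$.

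For the diameter bound I would carry out a layered case analysis based on the following observations: (i) every internal gadget vertex lies within distance $2$ of $\{p_i, q_i\}$; (ii) each variable gadget has eccentricity at most $4$, with $d(p_i, q_i) = 4$; (iii) for $i \neq j$, any pair from $\{p_i, q_i\} \times \{p_j, q_j\}$ is at distance exactly $2$ via a subdivision vertex; and (iv) $z$ has eccentricity $5$, witnessed by paths of the form $z - c_j - w_i - x_i - p_i - y$. Combining (i) and (iii) gives $d(u,v) \leq 2+2+2 = 6$ whenever $u,v$ lie in different variable gadgets; intra-gadget pairs are bounded by (ii); and pairs involving $z$ or a clause vertex $c_j$ are handled by (iv) together with short routes along direct clause--literal edges. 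The extremal pairs are of the form $(c_j, y)$, where $y$ is a subdivision between vertices of two variable gadgets none of whose literals appear in $c_j$; here every shortest route passes through $z$ and attains length exactly $6$.

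The main obstacle is organizational rather than conceptual: the vertex set of $G$ partitions into many types (gadget-internal, $P \cup Q$, clause vertices, $z$, and subdivisions), and one must check every pair of types so that no pair exceeds distance $6$. Once this bookkeeping is complete, bipartiteness together with the diameter bound, combined with the polynomial-time reduction and correctness already established in Theorem~\ref{strongrec}, immediately yield \NP-completeness of \textsc{Strong Geodetic Recognition} on bipartite graphs of diameter at most $6$.
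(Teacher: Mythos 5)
Your proposal is correct and follows essentially the same route as the paper: it exhibits exactly the same bipartition ($P \cup Q \cup W \cup \{z\}$ versus the literal vertices, clause vertices, and subdivision $y$ vertices) and bounds the diameter by $6$, identifying the same extremal pairs (a subdivision vertex $y$ and a clause vertex sharing no variable with the gadgets attached to $y$). The only difference is that you spell out the distance case analysis in more detail than the paper, which states the conclusion after a brief observation.
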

    \begin{proof}
        Consider the graph~$G=(V,E)$ constructed on Theorem~\ref{strongrec}.
        Let~$U = \{x_i,x_i',\overline{x_i},\overline{x_i}'\}$ and let~$Y$ be the set containing all~$y$ vertices of~$G$.
        Now, let~$A = P \cup Q \cup W \cup \{z\}$ and~$B = C \cup U \cup Y$.
        Note that both $A$ and~$B$ are independent sets, hence, $G$ is bipartite.
        Also, observe that the largest distance in the graph occurs between a vertex $y \in Y$ and a clause vertex that is not satisfied by either variable gadgets adjacent to~$y$, and this distance is~6.
    \end{proof}

    This result indicates that, unless \P=\NP, SGR parameterized by the diameter is not in~\XP, as the problem remains \NP-complete even for graphs with bounded diameter.
    Next we prove that the same holds for graphs of bounded degree, where we adapt the reduction presented on Theorem~\ref{strongrec}.
    
    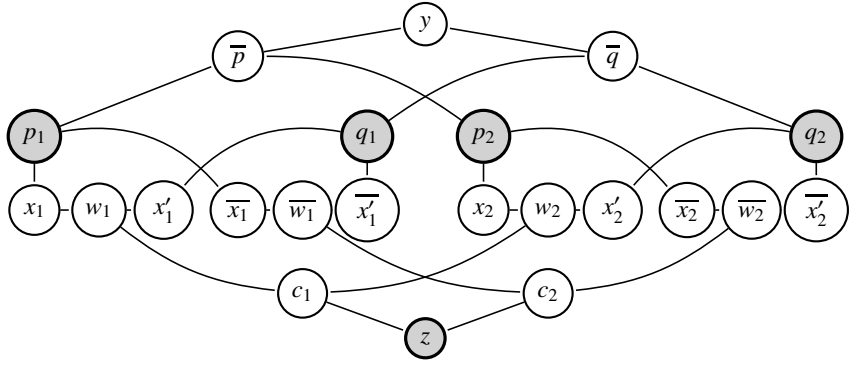
\begin{figure*}[t]
        \centering
        \begin{tikzpicture}[scale=0.85, transform shape,
            > = stealth, 
            shorten > = 1pt, 
            auto,
            node distance = 1.15cm, 
            semithick 
        ]

        \tikzstyle{every state}=[
            draw = black,
            thick,
            fill = white,
            minimum size = 6mm
        ]
		
        \node[state] (p) [very thick,fill = gray!35]{$p_1$};
        \node[state] (a1) [below of=p] {$x_1$};
        \node[state] (w1) [right of=a1, node distance = 1cm] {$w_1$};
        \node[state] (a2) [right of=w1, node distance = 1cm] {$x_1'$};
        
        \node[state] (a3) [right of=a2] {$\overline{x_1}$};
        \node[state] (w2) [right of=a3, node distance = 1cm] {$\overline{w_1}$};
        \node[state] (a4) [right of=w2, node distance = 1cm] {$\overline{x_1'}$};
        \node[state, draw=none] (center) [right of=a4, node distance = 0.9cm] {};
        
        \node[state] (q) [very thick,fill = gray!35] [above of=a4] {$q_1$};

        \node[state] (p2) [very thick,right of=q,fill = gray!35, node distance = 1.8cm]{$p_2$};
        \node[state] (x1) [below of=p2] {$x_2$};
        \node[state] (y1) [right of=x1, node distance = 1cm] {$w_2$};
        \node[state] (x2) [right of=y1, node distance = 1cm] {$x_2'$};
        
        \node[state] (x3) [right of=x2] {$\overline{x_2}$};
        \node[state] (y2) [right of=x3, node distance = 1cm] {$\overline{w_2}$};
        \node[state] (x4) [right of=y2, node distance = 1cm] {$\overline{x_2'}$};
        
        \node[state] (q2) [very thick,fill = gray!35] [above of=x4] {$q_2$};

        \node[state] (c1) [below of=w2, node distance = 1.3cm] {$c_1$};
        \node[state] (c2) [below of=y1, node distance = 1.3cm] {$c_2$};

        \node[state] (z) [very thick ,fill = gray!35] [below of=center, node distance = 2cm] {$z$};
        
        \node[state] (yy1) [above of=a3, node distance = 2.4cm] {$\overline{p}$};
        \node[state] (yy2) [above of=x2, node distance = 2.4cm] {$\overline{q}$};
        
        \node[state] (y) [above of=center, node distance = 2.9cm] {$y$};

        \path[-] (p) edge (a1);
        \path[-] (p) edge [bend left] (a3);
        \path[-] (q) edge [bend right] (a2);
        \path[-] (q) edge (a4);
        
        \path[-] (a1) edge (w1);
        \path[-] (w1) edge (a2);
        \path[-] (a3) edge (w2);
        \path[-] (w2) edge (a4);

        \path[-] (p2) edge (x1);
        \path[-] (p2) edge [bend left] (x3);
        \path[-] (q2) edge [bend right] (x2);
        \path[-] (q2) edge (x4);
        
        \path[-] (x1) edge (y1);
        \path[-] (y1) edge (x2);
        \path[-] (x3) edge (y2);
        \path[-] (y2) edge (x4);

        
        \path[-] (w1) edge [bend right = 16] (c1);
        \path[-] (y1) edge [bend left = 16] (c1);
        \path[-] (w2) edge [bend right = 16] (c2);
        \path[-] (y2) edge [bend left = 16] (c2);

        \path[-] (c1) edge (z);
        \path[-] (c2) edge (z);
        
        \path[-] (p) edge (yy1);
        \path[-] (p2) edge [bend right = 18] (yy1);
        \path[-] (q) edge [bend left = 18] (yy2);
        \path[-] (q2) edge (yy2);
        \path[-] (yy1) edge (y);
        \path[-] (yy2) edge (y);


    \end{tikzpicture}
        \caption{An instance of SGR arising from an instance of \textsc{3-SAT}$_3$: $X=\{x_1,x_2\}$, $C=\{c_1,c_2\}$, with $c_1 = (x_1,x_2)$, $ c_2 = (\overline{x_1},\overline{x_2})$. The vertices marked in gray belong to $S$.}
        \label{fig:SGRmaxdegree}
    \end{figure*}

    \begin{theorem}\label{strongrecdegree}
        \textsc{Strong Geodetic Recognition} restricted to bipartite graphs with maximum degree bounded by~4 is \NP-complete.
    \end{theorem}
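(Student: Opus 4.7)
The plan is to adapt the reduction used in Theorem~\ref{strongrec} so that the resulting graph has maximum degree at most~$4$ while remaining bipartite. The variable gadget already has bounded degree: since each variable appears at most three times and we assume it appears with both signs, each~$w_i$ and~$\overline{w_i}$ is incident to at most two clause vertices, giving degree at most~$4$. The vertices whose degree depends on the input size are the clause-hub~$z$ (adjacent to every clause vertex in the original) and the endpoints~$p_i,q_i$ (each incident to a private length-$2$ path for every other vertex of~$P \cup Q$). These are the only obstructions I need to remove.

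First, I would replace the clause-hub~$z$ by a bounded-degree structure. Instead of attaching every clause vertex~$c_i$ directly to~$z$, I build a rooted binary tree of auxiliary vertices whose leaves are the~$c_i$ and whose root is joined to the new~$z \in S$. A balanced tree of depth $O(\log m)$ keeps every internal degree at~$3$; padding the branches with extra vertices of degree~$2$ where needed guarantees that every clause sits at the same distance from~$z$ and that along each root-to-leaf branch the shortest path is unique. This preserves the crucial property that the only $p_i,z$- and $q_i,z$-shortest paths of the right length must pass through some~$w_i$ or~$\overline{w_i}$ and through a unique clause vertex.

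Next, I would handle the connectors between the vertices of $P$ and $Q$. Following the schematic of Figure~\ref{fig:SGRmaxdegree}, I collapse the pairwise length-$2$ paths of the original reduction into two shared subgraphs rooted at distinguished vertices~$\overline{p}$ and~$\overline{q}$, linked by a single vertex~$y$. To keep degrees bounded when there are many variable gadgets, $\overline{p}$ and $\overline{q}$ are themselves expanded into bounded-depth binary trees whose leaves are, respectively, the $p_i$'s and the $q_i$'s, again balanced and padded so that every pair within $P$, within $Q$, and across $P$ and $Q$ admits a single shortest path through the connector that covers all of its internal auxiliary vertices once the paths of~$I(S)$ are assigned consistently.

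With the construction in place, the remaining work mirrors Theorem~\ref{strongrec}. Bipartiteness follows by extending the $2$-coloring of Corollary~\ref{bipartido} across the new trees by alternation. For correctness, a satisfying truth assignment yields the same choice of paths inside each variable gadget together with the forced unique shortest paths in the clause-hub and in the connector; conversely, a strong geodetic covering still forces a binary per-variable choice in order to cover $\{x_i,x_i',\overline{x_i},\overline{x_i}'\}$ and still needs every clause vertex to be covered by a $p_i,z$- or $q_i,z$-shortest path, yielding a satisfying assignment. The main obstacle I expect is controlling uniqueness of shortest paths in the trees that replace~$z$ and the connector hubs: one has to ensure that the only length-correct way to reach a given clause or to cross between two variable gadgets is through its intended branch, which requires a careful choice of tree depths and of padding lengths so that no alternative path of equal length appears and so that the fine-grained distance relations exploited in the original proof are preserved.
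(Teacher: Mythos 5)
Your high-level architecture matches the paper's: a full binary tree replacing the star at~$z$, and binary trees rooted at~$\overline{p}$ and~$\overline{q}$ (joined by~$y$) replacing the pairwise length-$2$ connectors, with degrees then bounded by~$4$ and bipartiteness preserved. However, there is a genuine gap at exactly the point you defer as ``the main obstacle'': you never resolve it, and the fix you gesture at (padding branches \emph{inside} the trees with degree-$2$ vertices) cannot work. The paper's essential quantitative step, which is absent from your construction, is to subdivide the four gadget edges $p_ix_i$, $p_i\overline{x_i}$, $q_ix_i'$, $q_i\overline{x_i}'$ into paths of length $\alpha=\log_2 n$, i.e., to lengthen the \emph{gadget side}, not the connector side. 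Without this, as soon as $n$ is moderately large the distance between terminals of different gadgets through the connector is about $2\log_2 n+2$, while a route such as $p_i\,x_i\,w_i\,c\,w_j\,x_j'\,q_j$ through a shared clause vertex has length~$6$; so cross-gadget shortest paths abandon the connector. This breaks both directions: in the forward direction the internal vertices of $T_P$, $T_Q$ and $y$ are no longer covered by any shortest path, so a satisfiable formula need not yield a strong geodetic set; in the converse direction clause vertices can be covered by $p_i,q_j$-paths that are unrelated to any per-variable choice, so coverage of all clause vertices no longer certifies a satisfying assignment. Padding the connector trees only increases the connector distances and makes the imbalance worse.

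With the paper's subdivision, the connector route for a cross pair costs $2\alpha+2$ versus at least $2\alpha+4$ through the clause layer, so cross pairs are forced through $T_y$ (and cannot touch clause vertices), while $p_i$--$z$ shortest paths are still forced through $w_i$ or $\overline{w_i}$ and the $p_i$--$q_i$ gadget path remains a geodesic available to cover the unchosen literal side. Your proposal identifies that distances must be tuned but supplies neither this padding nor any verification of the resulting distance inequalities, so as written the reduction is incorrect for large~$n$.
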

    \begin{proof}
        We will reduce an instance~$\Pi$ of \textsc{3-SAT}$_3$, as in Theorem \ref{strongrec}, to an instance~$\Pi'$ of SGR.
        Let~$X=\{x_1, x_2, \dotsc, x_n\}$ be the set of variables and~$C=\{c_1, c_2, \dotsc, c_m\}$ be the set of clauses of~$\Pi$.
        We assume that~$|X|$ and~$|C|$ are exact powers of~2, since, for any instance $\Pi$, artificial variables and clauses can be added to $\Pi$ in order to satisfy this assumption, making the equivalent resulting instance at most twice as large as $\Pi$.

        Let~$G=(V,E)$ be the graph associated with the instance of SGR obtained in Theorem~\ref{strongrec}.
        Now, we present some adaptations on~$G$ in order to construct a graph~$G'$ associated with~$\Pi'$.
        Figure \ref{fig:SGRmaxdegree} depicts an example of the construction.
        Let~$G' := G[P \cup Q \cup U \cup W \cup C]$ and then do the following modifications to~$G'$: add a vertex~$z$ and connect~$z$ to all clause vertices by using a full binary tree~$T_z$, that is rooted at~$z$ and whose leaves are all the clause vertices.
        Observe that this operation results on~$z$ as a~2-degree vertex and all introduced auxiliary vertices have degree~3.
        Moreover, the size of~$T_z$ is the number of clause vertices of $G$.

        Remembering that~$P = \{p_1, p_2, \dotsc, p_n\}$ and~$Q = \{q_1, q_2, \dotsc, q_n\}$, we add some more gadgets to~$G'$ as follows.
        Add a vertex~$\overline{p}$ and connect it to all vertices in~$P$ by using a binary tree~$T_P$, as explained for~$T_z$.
        Analogously, add a vertex~$\overline{q}$ and connect it to all vertices in~$Q$ by using an additional binary tree~$T_Q$.
        Finally, add a vertex~$y$ and the edges~$\overline{p}y$ and~$\overline{q}y$, resulting in a binary tree~$T_y$.
        
        Concluding the construction, let~$\alpha = \log_2{n}$. If~$\alpha > 1$, then, for every edge~$e$ among the edges~$\{p_ix_i$, $p_i\overline{x_i}$, $q_i\overline{x_i}',q_ix_i'\}$, for every~$i \in [n]$, replace~$e$ by a path~$P_e$ having~$\alpha$ edges.
        The construction of~$G'$ is complete and now it remains to prove that~$\Pi$ is equivalent to recognizing whether the set~$S = P \cup Q \cup \{z\}$ is a strong geodetic set of~$G'$ (instance~$\Pi'$).
        Observe that~$G'$ is a bipartite graph of maximum degree~4 (a clause vertex associated with a 3-sized clause has~3 literal vertex neighbors and one in~$T_z$).

        Assume that~$\Pi$ is satisfiable, then there exists a truth assignment~$T$ of~$X$ satisfying all clauses.
        Now, we construct a shortest path assignment~$I(S)$ proving that~$S$ is a strong geodetic set of~$G'$.
        For every variable~$x_i$ that is set to true in~$T$, do the following:
        \begin{itemize}
            \item Let~$C_{p_i,z} = (P_{p_i,x_i},w_i,c,P_{c,z})$ be a shortest path between~$p_i$ and~$z$, such that~$P_{p_i,x_i}$ denotes the path that replaces the edge~$p_ix_i$ and~$P_{c,z}$ denotes the unique shortest path between a clause vertex~$c \in N(w_i)$ and~$z$ in~$G'$.
            Observe that~$P_{c,z}$ traverse~$T_z$.
            We add~$C_{p_i,z}$ to~$I(S)$.
            \item Analogously, let~$C_{q_i,z}$ be a shortest path between~$q_i$ and~$z$ such that~$C_{q_i,z} = (P_{q_i,x_i'},w_i,c',P_{c',z})$.
            Here, if~$w_i$ is adjacent to~2 different clause vertices, then~$c' \in N(w_i)$ and~$c' \neq c$, otherwise, $c = c'$.
            We add~$C_{q_i,z}$ to~$I(S)$.
            \item Finally, let~$C_{p_i,q_i}$ be a shortest path between~$p_i$ and~$q_i$ such that~$C_{p_i,q_i} = (P_{p_i,\overline{x_i}},\overline{w_i},\overline{x_i}',P_{\overline{x_i}',q_i})$.
            We add~$C_{p_i,q_i}$ to~$I(S)$.
        \end{itemize}
        Variables that are set to false will be treated analogously, as in Theorem~\ref{strongrec}.
        Now, observe that all vertices in variable gadgets are covered.
        Moreover, since~$\Pi$ is satisfiable, all clause vertices are covered as well, because the shortest path assignment explained covers (satisfies) the same clause vertices (clauses) as the truth assignment~$T$.
        This also implies that all internal vertices of~$T_z$ are covered.
        It remains to determine the shortest paths between vertices in~$S \setminus \{z\}$ lying in different variable gadgets.
        Every such paths will traverse~$T_y$, covering all internal vertices in it.
        Finally, all vertices of~$G'$ are covered and~$S$ is a strong geodetic set of~$G'$.
    
        For the converse, assume that~$S$ is a strong geodetic set of~$G'$, hence, there exists a shortest path assignment~$I(S)$ that covers all vertices of~$G'$.
        First, note that for every variable~$x_i \in X$, both shortest paths between~$p_i$ and~$z$ and between~$q_i$ and~$z$ must traverse either~$w_i$ or~$\overline{w_i}$, in the same way as in Theorem~\ref{strongrec}.
        Moreover, observe that shortest paths between vertices in~$S \setminus \{z\}$ from different variable gadgets will always traverse~$T_y$, assuring that these paths do not cover clause vertices.
        Concluding, since variable gadgets force a choice between a positive or a negative literal, the existence of a shortest path assignment covering all clause vertices indicates the existence of a truth assignment for~$\Pi$ satisfying all clauses, and the proof is concluded.
    \end{proof}

    Observe that the previous result indicates that, unless \P=\NP, SGR parameterized by the maximum degree is not in~\XP, as the problem is \NP-complete even for graphs with max-degree bounded by 4.
    Moreover, by Proposition~\ref{prop:SGRtoSG}, it is possible to conclude that SG restricted to bipartite graphs of maximum degree~4 is also \NP-complete, as the constructed instance~$\Pi'$ is equivalent to an instance~$\phi = (\overline{G},k)$ of~SG, where~$\overline{G}$ is obtained from~$G'$ by adding pendant vertices adjacent to the vertices in~$S$ on~$\Pi'$ and~$k = 2n + 1$. Observe that the max-degree of $\overline{G}$ does not exceed 4.
\section{\textsc{Strong Geodetic} for Co-Bipartite Graphs}
\label{sec:co_np}
    A \textit{co-bipartite} graph is the complement of a bipartite graph.
    Alternatively, a graph is said to be co-bipartite if its vertex set can be partitioned into two cliques.
    Note that the maximum diameter of a connected co-bipartite graph is~3.

    We prove that SG is \NP-complete even for co-bipartite graphs by a polynomial reduction inspired by~\cite{multipartite}. We reduce from the \textsc{dominating set} problem for connected bipartite graphs. Note that SG is in \NP, since one can verify in polynomial time whether a shortest path assignment $I(S)$ used as a certificate is valid, covers all vertices of the graph and has $|S| \leq k$.

    \begin{theorem}\label{theorem_cobipartite}
        \textsc{Strong Geodetic} restricted to co-bipartite graphs is \NP-complete.
    \end{theorem}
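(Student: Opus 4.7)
The plan is first to dispatch membership in NP exactly as sketched above (a shortest-path assignment $I(S)$ is a polynomial-size certificate; one checks in polynomial time that every $uv$-path in $I(S)$ is shortest in $G$, that one path is assigned per pair, and that the internal vertices of $I(S)$ cover $V(G)\setminus S$), and then to give a polynomial reduction from the NP-complete problem \textsc{Dominating Set} restricted to connected bipartite graphs. Given such an instance $(H,k)$ with $H=(A\cup B,E_H)$, I would construct a co-bipartite graph $G$ whose vertex set is partitioned into two cliques $K_A,K_B$ built on copies of $A$ and $B$, with the cross edges chosen (either as $E_H$ or its complement across $A\times B$, whichever makes the length-$2$ shortest paths in $G$ correspond exactly to the edges of $H$). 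To make certain vertices mandatory in every strong geodetic set, I would attach a small collection $F$ of ``forced'' vertices that are adjacent only to their own side of the partition; each such vertex is simplicial in $G$ and hence belongs to every strong geodetic set (as noted before Proposition~\ref{prop:SGRtoSG}). Set $k':=k+|F|$.

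For the forward direction, given a dominating set $D\subseteq V(H)$ of size at most $k$, let $S:=D\cup F$. I would build a shortest-path assignment $I(S)$ as follows: between pairs of forced vertices from opposite sides, assigned paths have length~$3$ and can be routed through any designated cross-edge, which lets us cover two residual vertices per pair; between a forced vertex in one side and a vertex $d\in D$ on the other side, the length-$2$ shortest path $f$--$w$--$d$ passes through a neighbor $w$ of $d$ in $G$, which I arrange to be precisely a vertex dominated by $d$ in $H$. Iterating this covers every $v\in V(H)\setminus D$ because $D$ dominates $H$; any remaining auxiliary vertices are absorbed by canonical paths between forced vertices.

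For the backward direction, suppose $S$ is strong geodetic with $|S|\leq k'$. Since $F\subseteq S$ is forced, $D:=S\setminus F$ has size at most $k$. To show $D$ dominates $H$, take any $v\in V(H)\setminus D$; then $v\in V(G)\setminus S$, so $v$ lies internal to some shortest $u,w$-path of $I(S)$. Because the diameter of a connected co-bipartite graph is at most~$3$ and same-clique pairs are at distance~$1$ and cover nothing, this covering path must be a cross-clique path of length~$2$ or $3$ whose internal vertices are forced, by the construction, to be $G$-neighbors of $v$ coming from the opposite side; translating back to $H$ yields a vertex of $D$ that dominates $v$.

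The main obstacle, and where I would spend the most care, is the calibration of $G$ so that the meagre covering power of a co-bipartite graph (all distances $\leq 3$, length-$1$ pairs covering nothing, and a strict uniqueness requirement on the chosen $u,v$-geodesic) matches the domination relation \emph{exactly}. In particular, I need to ensure that (i) every length-$2$ or length-$3$ geodesic that can realistically appear in $I(S)$ has its internal vertices identifiable with genuine neighbors/dominators in $H$, and (ii) the uniqueness of the assigned shortest path does not double-book a neighbor $w$ that is supposed to cover several non-dominated vertices simultaneously. I expect to resolve (ii) by duplicating each original vertex inside the cliques (introducing ``twin'' copies in $K_A$ and $K_B$) and tuning $|F|$ so that the number of available cross-clique geodesics among forced vertices is large enough to separately cover each vertex of $V(H)\setminus D$, but small enough that the counting $k'=k+|F|$ forces the original choice to be an actual dominating set of $H$.
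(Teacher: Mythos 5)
Your high-level plan coincides with the paper's: membership in \NP{} via the path-assignment certificate, and a reduction from \textsc{Dominating Set} on connected bipartite graphs to a co-bipartite graph built from two cliques on (copies of) $A$ and $B$, with forced simplicial vertices and $k'=k+|F|$. However, the construction you leave ``to be calibrated'' is exactly where the proof lives, and as sketched two steps fail. The double-booking problem you flag is not resolved by ``tuning $|F|$'': the paper's fix is to give \emph{every} original vertex its own private simplicial shadow copy inside its own clique (sets $\overline{A},\overline{B}$), so that a vertex $p_i\in A\setminus S$ is covered by the length-$2$ geodesic $(u,p_i,a_i)$, where $u\in B\cap S$ is a dominator of $p_i$ and $a_i$ is $p_i$'s shadow. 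Since the shadow endpoint differs for each covered vertex, all pairs used are distinct, and one dominator may cover many vertices without reusing a pair; your generic pool $F$ plus unspecified ``twin copies'' does not yet achieve this.

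More seriously, your backward direction does not go through. You allow (and in the forward direction actively exploit) length-$3$ geodesics between forced vertices on opposite sides; the two internal vertices of such a path form an adjacent pair $u\in A$, $w\in B$, \emph{neither of which need belong to $S$}. These paths cover original vertices ``for free'', so from $|S|\le k+|F|$ you cannot conclude that $S\setminus F$ dominates $H$: with $|F_A|\cdot|F_B|$ such pairs available, $S=F$ could already be strong geodetic even when $H$ has no small dominating set. (Your claim that the internal vertices are ``forced to be $G$-neighbors of $v$ coming from the opposite side'' also conflates internal vertices, which are not in $S$, with endpoints, which are.) The paper eliminates this hazard by adding two universal vertices $a',b'$, which makes the constructed graph have diameter~$2$: every covering geodesic then has length exactly $2$, both endpoints lie in $S$ and are adjacent to the covered vertex, and since $a',b'$ cannot be endpoints of a length-$2$ geodesic and a vertex of $A$ has no neighbors in $\overline{B}$, one endpoint must be a $G$-neighbor of the covered vertex lying in $S\cap V$, i.e.\ a dominator; moreover all geodesics between shadows of opposite sides pass through $a'$ or $b'$, so such pairs can never cover original vertices. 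Without this calibration (or an equivalent one), the equivalence between the dominating set and the strong geodetic set is not established.
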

    \begin{proof}
        Let~$G=(V,E)$ be a connected bipartite graph whose parts are $A = \{p_1, p_2, \dotsc, p_{|A|}\}$ and~$B = \{q_1, q_2, \dotsc, q_{|B|}\}$ having cardinality at least~2.
        We construct the graph~$H=(V',E')$, with: $V' = V \cup \overline{A} \cup \overline{B} \cup \{a',b'\}$, such that~$\overline{A} = \{a_1, a_2, \dotsc, a_{|A|}\}$ and~$\overline{B} = \{b_1, b_2, \dotsc, b_{|B|}\}$.
        
        The edge set~$E'$ contains all edges in~$E$ plus the necessary additions such that~$a'$ and~$b'$ are universal vertices of~$G'$ and~$A \cup \overline{A} \cup \{a'\}$ and~$B \cup \overline{B} \cup \{b'\}$ are cliques.
        Observe that~$H$ is a co-bipartite graph whose diameter is~2.
        
        Let~$D$ be a dominating set of~$G$, with~$|D| = k$.
        We will show that~$H$ has a strong geodetic set~$S = D \cup \overline{A} \cup \overline{B}$.
        We construct a suitable~$I(S)$ covering all vertices.
        The shortest paths~$(b_1,a',a_1)$ and~$(b_2,b',a_1)$ are assigned to cover~$a'$ and~$b'$, respectively.
        For any vertex~$p_i \in A \setminus S$, it holds that~$p_i$ has at least a neighbor~$u \in B \cap S$, then the shortest path~$(u,p_i,a_i)$ is assigned to cover~$p_i$.
        Finally, for any vertex~$q_i \in B \setminus D$, it holds that~$q_i$ has at least a neighbor~$v \in A \cap S$, so we assign the~$(v,q_i,b_i)$ shortest path to cover~$q_i$.
        Concluding, $S$ is a strong geodetic set of~$H$, with~$|S| = |D| + |V|$.
        
        It remains to prove that if~$S$ is a strong geodetic set of~$H$, with~$|S| \leq k + |\overline{A} \cup \overline{B}|$, then~$G$ has a dominating set~$D$ with~$|D| \leq k$.
        Note that if~$S$ is a strong geodetic set of~$H$, then~$\overline{A} \cup \overline{B} \subseteq S$, since~$\overline{A}$ and~$\overline{B}$ contain only simplicial vertices.
        Now, we show that~$S \cap V$ is a dominating set of $G$.
        Since~$S$ is a strong geodetic set of~$H$, for each vertex~$x \in A \setminus S$ there exists a shortest path in~$I(S)$ that contains~$x$.
        Note that~$D(u,v) \leq 2$, for all~$u,v \in V'$, so there exists a shortest path~$(a,x,b)$ in~$H$ such that~$a,b \in S$.
        Recall that one of the vertices at the shortest path must be in~$B$, and we denote it~$b$.
        This holds because~$x$ has no neighbors in~$\overline{B}$ and~$b'$ cannot be in a shortest path, because~$b'$ is universal.
        Concluding, every vertex~$x \in A \setminus S$ has a neighbor in~$B$ belonging to~$S$ and the same holds for any vertex~$x' \in B \setminus S$.
        Thereafter, $S \cap V$ is a dominating set of~$G$, with~$|S \cap V| \leq k$, since~$|S \cap (V' \setminus V)| = |V|$.
    \end{proof}

    Observe that this result indicates that, unless \P=\NP, SG parameterized by the diameter is not in~\XP, as the problem is \NP-complete even for graphs of diameter~2.
    Nevertheless, the next result proves that SG parameterized by the diameter and the natural parameter~$k$ in conjunct belongs to~\FPT.

    \begin{theorem}\label{fptdiameter}
        Let~$G=(V,E)$ be a graph of diameter~$D$.
        The problem of deciding whether~$G$ has a strong geodetic set of cardinality~$k$ is fixed parameter tractable on the parameters~$D$ and~$k$ in conjunct.
    \end{theorem}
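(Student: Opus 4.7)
The plan is to give a kernelization argument that shrinks the instance to size bounded by a function of $D$ and $k$, after which a brute-force search solves the problem in time depending only on $D$ and $k$.

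First I would observe the basic covering bound. If $S$ is a strong geodetic set with $|S|=k$ and shortest-path assignment $I(S)$, then every vertex of $G$ is either in $S$ or lies as an internal vertex of some path in $I(S)$. There are $\binom{k}{2}$ pairs in $S$, each producing one geodesic of length at most $D$, hence contributing at most $D-1$ internal (non-endpoint) vertices. Therefore
\begin{equation*}
    n \;\leqslant\; k + \binom{k}{2}(D-1).
\end{equation*}
The first reduction rule is then: if $n$ exceeds this bound, return \textbf{NO}; otherwise $n$ is already bounded by a function $f(k,D)$, giving a kernel.

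Next, with $n \leqslant f(k,D)$ fixed, I would enumerate all $\binom{n}{k} \leqslant f(k,D)^k$ candidate sets $S$ of size $k$. For each candidate, I need to decide whether $S$ is strong geodetic, which is precisely \textsc{SGR} and hence \NP-hard in general; but here the input itself is bounded. For every ordered pair $u,v \in S$, compute the collection $\mathcal{P}(u,v)$ of all $u,v$-shortest paths in $G$; since each geodesic has at most $D+1$ vertices from a universe of size at most $f(k,D)$, we have $|\mathcal{P}(u,v)| \leqslant f(k,D)^{D}$. Then iterate over all tuples that select one path per pair, at most
\begin{equation*}
    \bigl(f(k,D)^{D}\bigr)^{\binom{k}{2}}
\end{equation*}
choices, and accept if some tuple consists of pairwise distinct geodesics whose union covers $V(G)$.

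Combining both stages yields a total running time of the form $g(k,D)\cdot n^{O(1)}$ (indeed, once the kernel applies, the polynomial factor is trivial), which is precisely the definition of \FPT\ in the combined parameter. The only subtle point, and the part I would double-check carefully, is the validity of the covering bound above, namely that distinct pairs of $S$ in the assignment may share internal vertices but this only strengthens the inequality rather than weakens it; and that counting at most $D-1$ internal vertices per pair is tight since pairs at distance less than $D$ contribute even fewer vertices. No other step presents a real obstacle, since the enumeration over candidate sets and over path assignments is a straightforward bounded search once the kernel has been produced.
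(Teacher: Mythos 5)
Your proof is correct and follows essentially the same route as the paper: the covering bound $n \leqslant k + \binom{k}{2}(D-1)$ yields a polynomial kernel (answer \textbf{NO} if the bound fails), from which fixed-parameter tractability follows. The only difference is that you explicitly spell out the brute-force search on the kernel, which the paper leaves implicit by appealing to the equivalence between kernelization and membership in \FPT.
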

    \begin{proof}
        Let~$S \subseteq V$ with~$|S| = k$ and let~$U = V \setminus S$.
        If~$S$ is a strong geodetic set of~$G$, then every vertex in~$U$ must be internal of some shortest path between vertices in~$S$.
        In addition, observe that there are~$\binom{k}{2}$ pairs of vertices of~$S$, and for each one of these pairs it will be assigned a shortest path that will cover at most~$D - 1$ vertices in~$U$.
        Therefore, if $|V| - k > \binom{k}{2} \times (D - 1)$,
        then no set~$S$ with cardinality~$k$ can be a strong geodetic set.
        Otherwise, $|V| \leq \binom{k}{2} \times (D - 1) + k$,
        resulting that the size of the graph is bounded by a polynomial function on~$D$ and~$k$.
        It follows that we found a polynomial kernel of the problem in polynomial time, and the theorem follows.
    \end{proof}
    
    The idea behind this result is that the graph associated with any \textbf{YES} instance has a limited number of vertices: $ |V| \leq \binom{k}{2} \times (D - 1) + k$. Otherwise, one can assure that it consists of a \textbf{NO} instance. Observe that~SGR is also~\FPT~on the parameters~$D$ and~$k$, with~$k$ indicating the size of the set~$S$ given as input. The same argumentation applies.
\section{The Strong Geodetic Problem for Chordal Graphs}
\label{sec:chordal}

    A graph is said to be \emph{chordal} if it has no induced cycles of length at least~4.
    A \emph{split graph} is that which can be partitioned into a clique and an independent set.
    In this section we present a reduction from the \textsc{Dominating Set} problem for connected split graphs~\cite{bertossi1984dominating} to~SG for chordal graphs.

    \begin{theorem}\label{teorema_chordal}
        The \textsc{Strong Geodetic} problem for chordal graphs with diameter~2 is \NP-complete.
    \end{theorem}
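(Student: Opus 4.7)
The plan is to reduce from \textsc{Dominating Set} restricted to connected split graphs, which is \NP-complete~\cite{bertossi1984dominating}. Given such an instance --- a connected split graph $G=(V,E)$ with partition $V=K\cup I$ (clique plus independent set) and a positive integer $k$ --- I would construct a graph $H$ by adding a new vertex $u$ adjacent to every vertex of $V$, and, for each $v\in V$, adding a pendant $v'$ with $N_H(v')=\{v,u\}$. The output \textsc{SG} instance is $(H,\,|V|+k)$. Since split graphs are chordal, and since attaching a universal vertex and then appending simplicial vertices both preserve chordality, $H$ is chordal; the universality of $u$ additionally yields $\mathrm{diam}(H)=2$. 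A brief case analysis shows that $\{v':v\in V\}$ is exactly the set of simplicial vertices of $H$, so every strong geodetic set of $H$ must contain all $|V|$ pendants.

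For the forward implication, let $D\subseteq V$ be a dominating set of $G$ of size $k$ and set $S=D\cup\{v':v\in V\}$, so $|S|=|V|+k$. I would define a shortest-path assignment $I(S)$ as follows: each pair $\{v'_i,v'_j\}$ with $i\neq j$ receives the unique shortest path $(v'_i,u,v'_j)$, covering $u$; for every $v\in V\setminus D$ I pick some $d_v\in D\cap N_G(v)$ and assign to the pair $\{v',d_v\}$ the shortest path $(v',v,d_v)$, covering $v$; every remaining pair receives any valid shortest path (for example one through $u$). Since every vertex of $V(H)\setminus S=(V\setminus D)\cup\{u\}$ is then covered, $S$ is a strong geodetic set of $H$ of size $|V|+k$.

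For the converse, let $S$ be a strong geodetic set of $H$ with $|S|\leqslant|V|+k$. Since every simplicial vertex $v'$ lies in $S$, the set $D:=S\cap V$ has $|D|\leqslant k$. For any $v\in V\setminus S$, the diameter-$2$ condition forces the covering path to have length $2$, so there exist non-adjacent $x,y\in S\cap N_H(v)$ with $v$ as midpoint. Because $u$ is universal it cannot lie in any non-adjacent pair, and because $v'$ is adjacent only to $v$ and $u$, any vertex non-adjacent to $v'$ within $N_H(v)$ belongs to $N_G(v)\cap V$; hence at least one of $x,y$ lies in $N_G(v)\cap D$, so $D$ dominates $v$. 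Therefore $D$ is a dominating set of $G$ of size at most $k$. The main difficulty I anticipate is this backward direction: the covering pair $\{x,y\}$ must in every subcase yield a witness in $N_G(v)\cap D$, which requires combining the universality of $u$ (which rules it out from non-adjacent pairs) with the pendant structure of $v'$ (whose non-neighbors inside $N_H(v)$ are precisely $N_G(v)$). The other steps --- chordality, diameter, and the simplicial-vertex characterization --- are routine once the construction is fixed.
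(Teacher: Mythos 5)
Your proposal is correct and follows essentially the same route as the paper's proof: a reduction from \textsc{Dominating Set} on connected split graphs in which one attaches a pendant (simplicial) vertex to every original vertex and adds a universal vertex, with target $|V|+k$; your construction coincides with the paper's (its $x_u$, $y_v$ and $z$ are your $v'$ and $u$), and both directions are argued the same way, with the pendants forced into any strong geodetic set and the universal vertex excluded as an endpoint of any length-$2$ covering geodesic. The only (inessential) cosmetic difference is that you cover the universal vertex by pendant--pendant pairs while the paper uses two specific pendants, and you note but do not need the claim that the pendants are \emph{exactly} the simplicial vertices.
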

    \begin{proof}
        Let~$G=(V,E)$ be a connected split graph with vertex set partitioned into a clique~$C$ and~$I$ an independent set.
        And let~$H$ be the graph obtained from~$G$ as follows.
        For each vertex~$u \in I$ add the vertex~$x_u$ to~$H$, and, for each vertex~$v \in C$ add the vertex~$y_v$ to~$H$, and finally, add a universal vertex~$z$ to~$H$.
        Moreover, for each vertex~$u \in I$ add an edge between~$u$ and~$x_u$, and, for each vertex~$v \in C$ add an edge between~$v$ and~$y_v$.
        Observe that~$H$ is a chordal graph of diameter~2.
        
        Assume that~$G$ has a dominating set~$D$, with~$|D| \leq k$.
        Let~$X = \{x_u \mid u \in I\}$ and~$Y = \{y_u \mid u \in C\}$.
        We show that~$S = D \cup X \cup Y$ is a strong geodetic set of~$H$.
        First, note that any $y,y'$-shortest path contains~$z$, with~$y,y' \in Y$ and~$y \neq y'$.
        Then we include the shortest path~$(y,z,y')$ in~$I(S)$.
        Now, let~$u$ be a vertex in~$I \setminus S$, which implies that~$u \notin D$, and, then~$u$ has a neighbor~$v \in C \cap D$, that is, $v \in S$.
        We include the~$x_u,v$-shortest path that contains~$u$ in~$I(S)$.
        Let~$p$ be a vertex in~$C \setminus S$.
        Analogously, $p$ has a neighbor~$q \in I \cap S$.
        We include the $y_p,q$-shortest path that contains~$p$ in~$I(S)$.
        Thus~$S$ is a strong geodetic set of~$H$, with~$|S| \leq k + |V|$.
        
        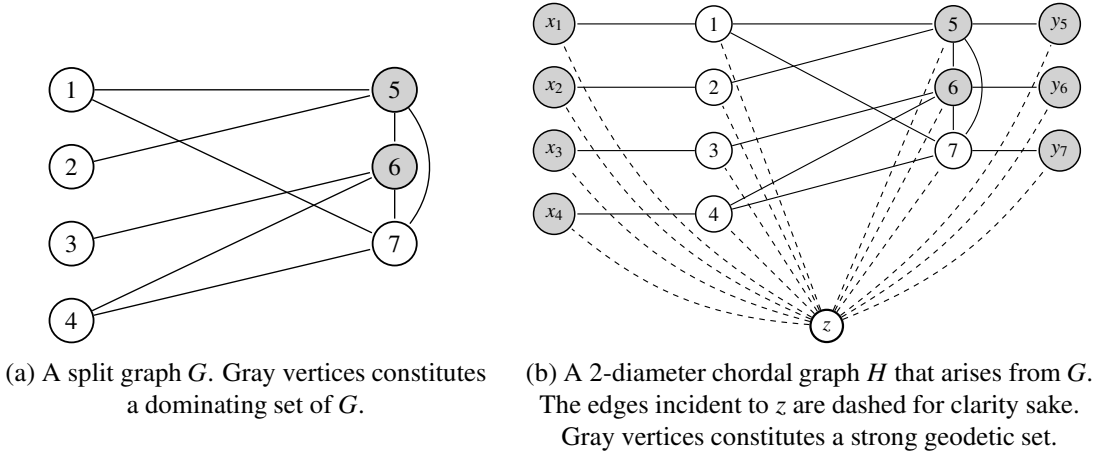
\begin{figure*}[t!]
        \centering
        
        \begin{subfigure}[t]{0.43\linewidth}
            \centering
        	\scalebox{0.85}{\begin{tikzpicture}[
            > = stealth, 
            shorten > = 1pt, 
            auto,
            node distance = 1.2cm, 
            semithick 
        ]

        \tikzstyle{every state}=[
            draw = black,
            thick,
            fill = white,
            minimum size = 1mm
        ]
		
        \node[state] (1) {$1$};
        \node[state] (2) [below of=1] {$2$};
        \node[state] (3) [below of=2] {$3$};
        \node[state] (4) [below of=3] {$4$};

        \node[state] (5) [right of = 1, node distance = 5cm, fill = gray!35] {$5$};
        \node[state] (6) [below of=5, fill = gray!35] {$6$};
        \node[state] (7) [below of=6] {$7$};

        \path[-] (5) edge (6);
        \path[-] (6) edge (7);
        \path[-] (5) edge [bend left = 37] (7);

        \path[-] (1) edge (5);
        \path[-] (1) edge (7);
        \path[-] (2) edge (5);
        \path[-] (3) edge (6);
        \path[-] (4) edge (6);
        \path[-] (4) edge (7);


    \end{tikzpicture}}
            \caption{A split graph $G$. Gray vertices constitutes a dominating set of $G$.}
          \label{fig:split}
        \end{subfigure}
        ~
        \begin{subfigure}[t]{0.52\linewidth}
            \centering
            \scalebox{0.7}{\begin{tikzpicture}[
            > = stealth, 
            shorten > = 1pt, 
            auto,
            node distance = 1.2cm, 
            semithick 
        ]

        \tikzstyle{every state}=[
            draw = black,
            thick,
            fill = white,
            minimum size = 1mm
        ]
		
        \node[state] (1) {$1$};
        \node[state] (2) [below of=1] {$2$};
        \node[state] (3) [below of=2] {$3$};
        \node[state] (4) [below of=3] {$4$};
        
        \node[state] (x1) [left of=1,node distance=3cm, fill = gray!35] {$x_1$};
        \node[state] (x2) [below of=x1, fill = gray!35] {$x_2$};
        \node[state] (x3) [below of=x2, fill = gray!35] {$x_3$};
        \node[state] (x4) [below of=x3, fill = gray!35] {$x_4$};

        \node[state] (5) [right of = 1, node distance = 4.5cm, fill = gray!35] {$5$};
        \node[state] (6) [below of=5, fill = gray!35] {$6$};
        \node[state] (7) [below of=6] {$7$};
        
        \node[state] (y5) [right of = 5, node distance = 2cm, fill = gray!35] {$y_5$};
        \node[state] (y6) [below of=y5, fill = gray!35] {$y_6$};
        \node[state] (y7) [below of=y6, fill = gray!35] {$y_7$};
        
        \node[state] (z) [very thick,below right of=4, node distance = 3cm] {$z$};

        \path[-] (5) edge (6);
        \path[-] (6) edge (7);
        \path[-] (5) edge [bend left = 37] (7);
        
        \path[-] (1) edge (x1);
        \path[-] (2) edge (x2);
        \path[-] (3) edge (x3);
        \path[-] (4) edge (x4);
        \path[-] (5) edge (y5);
        \path[-] (6) edge (y6);
        \path[-] (7) edge (y7);
        
        \path[-] (1) edge (5);
        \path[-] (1) edge (7);
        \path[-] (2) edge (5);
        \path[-] (3) edge (6);
        \path[-] (4) edge (6);
        \path[-] (4) edge (7);
        
        \path[-] (z) edge [dashed] (1);
        \path[-] (z) edge [dashed] (2);
        \path[-] (z) edge [dashed] (3);
        \path[-] (z) edge [dashed] (4);
        \path[-] (z) edge [dashed] (5);
        \path[-] (z) edge [dashed] (6);
        \path[-] (z) edge [dashed] (7);
        \path[-] (z) edge [dashed, bend left = 18] (x1);
        \path[-] (z) edge [dashed, bend left = 18] (x2);
        \path[-] (z) edge [dashed, bend left = 18] (x3);
        \path[-] (z) edge [dashed, bend left = 18] (x4);
        \path[-] (z) edge [dashed, bend right = 15] (y5);
        \path[-] (z) edge [dashed, bend right = 15] (y6);
        \path[-] (z) edge [dashed, bend right = 15] (y7);


    \end{tikzpicture}}
            \caption{A 2-diameter chordal graph $H$ that arises from $G$. The edges incident to $z$ are dashed for clarity sake. Gray vertices constitutes a strong geodetic set.}
          \label{fig:chordal}
        \end{subfigure}
        \caption{Example illustrating the polynomial reduction presented on Theorem~\ref{teorema_chordal}}
        \label{fig:redchordal}
    \end{figure*}
        
        For the converse, assume that~$H$ has a strong geodetic set~$S$ with~$|S| \leq k + |V|$.
        First, observe that all vertices in~$X \cup Y$ are simplicial, thus~$X \cup Y \subseteq S$.
        Note that if some strong geodetic set~$S$ of~$H$ contains~$z$, then~$S \setminus \{z\}$ is also a strong geodetic set. Hence, we will assume that~$z \notin S$. 
        
        We now prove that~$D = S \cap V$ is a dominating set of~$G$.
        Let~$u \in V \setminus D$, there exists an~$m,n$-shortest path in~$H$ that contains $u$.
        As the diameter of~$H$ is~2, this path must be in the form~$(m,u,n)$.
        Suppose for a contradiction that neither~$m$ or~$n$ are in~$V$.
        So there are three cases for~$m$ and~$n$: $m$ and~$n$ are in~$X$, $m \in X$ and~$n \in Y$, and~$m$ and~$n$ are in~$Y$.
        For all the cases there would be a unique~$m,n$-shortest path: $(m,z,n)$, that leads to a contradiction. Therefore, $m$ or~$n$ must be in~$V \cap S$, so~$u$ has a neighbor in~$D$.
        Hence, $D$ is a dominating set of~$G$, and~$|D| \leq k$, since~$|(X \cup Y) \cap S| = |V|$.
    \end{proof}
\section{Polynomial Instances of SG and SGR}
\label{sec:polynomial}

    In this section we present some positive results.

    Despite the \NP-completeness of~SG for chordal graphs as seen, we can prove that the problem can be solved in linear time for \textit{block graphs}, a subclass of chordal graphs. A block graph is one in which all biconnected components are complete subgraphs. 

    A block graph is one in which all biconnected components are complete subgraphs. Now, we introduce the definition of a cut-tree, which is an important structure to understand block graphs and the next result.

    \begin{definition}[Cut-tree]\label{def_block}
        A cut-tree $T=(V',E')$ of a graph $G$ is a tree in which each vertex represents a biconnected component or a cut-vertex of $G$. There is an edge~$e \in E'$ for each pair of a cut-vertex $a$ and a biconnected component~$C$ of $G$, such that~$a \in C$.
    \end{definition}

    \begin{theorem}\label{teorema_block}
        Let~$G=(V,E$ be a block graph. The set~$S$ of all simplicial vertices of a block graph~$G$ is the minimum strong geodetic set of~$G$.
    \end{theorem}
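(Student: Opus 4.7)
The plan is to reduce the statement to two independent claims: first, that every strong geodetic set of $G$ contains every simplicial vertex (already recorded before Proposition~\ref{prop:SGRtoSG}); and second, that the set $S$ of all simplicial vertices is itself a strong geodetic set. The former gives $|S|$ as a lower bound on any strong geodetic number, while the latter shows this bound is attained, so $S$ is the (unique) minimum strong geodetic set of $G$.

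To prove the second claim I would first record two structural observations about block graphs. The first is that between any two vertices of a block graph there is a unique shortest path: vertices inside a common block (a clique) are adjacent, so the path is the single edge; vertices in distinct blocks must travel along the unique block--cut sequence of the cut-tree, and within each block the entry and exit cut-vertices are adjacent, so the shortest path is entirely determined by that sequence. As a consequence, the shortest-path assignment required by the strong geodetic definition is forced once $S$ is fixed, and checking the covering property is purely combinatorial. The second observation is that the non-simplicial vertices of $G$ are exactly its cut-vertices: a vertex lying in a single block has all neighbors inside one clique and is simplicial, while a vertex $v$ lying in two blocks $B_1$ and $B_2$ has neighbors $u \in B_1 \setminus\{v\}$ and $w \in B_2 \setminus\{v\}$ which cannot be adjacent (otherwise the triangle on $u,v,w$ would be contained in a biconnected subgraph spanning both $B_1$ and $B_2$, contradicting the maximality of blocks).

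The heart of the argument is then to show that every cut-vertex $c$ of $G$ is internal to the (unique) shortest path between some pair of simplicial vertices. I would work in the cut-tree $T$ of Definition~\ref{def_block}, where $c$ appears as a cut-vertex node of degree equal to the number of blocks containing $c$, hence at least $2$. The goal is to pick two components of $T - c$ and exhibit a simplicial vertex in each; the unique shortest path between these two vertices in $G$ is then forced to traverse $c$. To find a simplicial vertex in a given component $T_i$ of $T - c$, I would show that $T_i$ contains at least one leaf of $T$ (since $T_i$ is a non-empty subtree and every leaf of $T_i$ that is not adjacent to $c$ in $T$ is still a leaf of $T$, with a direct check when $T_i$ reduces to a single node), and then observe that every leaf of $T$ must be a block node (cut-vertex nodes have degree at least $2$) all but one of whose vertices are simplicial in $G$. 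The step I expect to require the most care is precisely this cut-tree bookkeeping: ruling out degenerate configurations such as a component of $T - c$ whose only block is itself not a leaf of $T$, which is dispatched by a short case analysis but deserves explicit attention.
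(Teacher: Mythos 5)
Your proposal is correct and follows essentially the same route as the paper: the lower bound from simplicial vertices being forced into any strong geodetic set, plus a cut-tree argument that every cut-vertex lies on the (unique) shortest path between simplicial vertices found in two different components of $T - c$. If anything, your treatment is slightly more careful than the paper's, since you insist on locating a leaf of $T$ (not merely of the component) and note that uniqueness of shortest paths in block graphs makes the path assignment conflict-free.
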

    \begin{proof}
     It holds that $S$ must be contained in any strong geodetic set of $G$, so if we prove that $S$ is a strong geodetic set, it has minimum cardinality. The vertices of the graph can be partitioned into two sets: simplicial vertices $S$ and cut-vertices $A$. Let $T = (V', E')$ be a cut-tree of $G$ and~$v \in A$. Consider $C_1$ and~$C_2$ as two connected components of $T[V' \setminus \{v\}]$. Let $f_1$ be a leaf of $T[C_1]$ and~$f_2$ a leaf of $T[C_2]$. Note that both $f_1$ and $f_2$ represent biconnected components of~$G$, which are complete graphs. As a result, each connected component denoted by~$f_1$ and $f_2$ has at least one simplicial vertex: $s_1$ and $s_2$, respectively. Finally, the $s_1,s_2$-shortest path contains $v$, whereas it is a cut-vertex. Thereafter, $S$ is a minimum strong geodetic set of $G$.
    \end{proof}
    
    \begin{corollary}\label{teorema_block_P}
        There is a linear-time algorithm that solves the \textsc{strong geodetic} problem for block graphs.
    \end{corollary}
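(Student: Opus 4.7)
The plan is to reduce the decision problem directly to computing the number of simplicial vertices of the input block graph $G$, and then to argue that this can be done in $O(n+m)$ time. By Theorem~\ref{teorema_block}, the set $S$ of simplicial vertices is a minimum strong geodetic set of $G$, so the strong geodetic number of $G$ equals $|S|$. Hence the algorithm simply outputs \textbf{YES} for the instance $(G,k)$ if and only if $|S| \leqslant k$.

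The first step is to compute the block-cut tree of $G$ using the classical Hopcroft--Tarjan depth-first search, which identifies all cut-vertices and all biconnected components (blocks) in time $\mathcal{O}(n+m)$. The second step is to mark simplicial vertices: in a block graph a vertex $v$ is simplicial if and only if it belongs to exactly one block, equivalently, if and only if $v$ is not a cut-vertex. This follows from the fact that every block of a block graph is a clique, so a non-cut-vertex has its entire (closed) neighborhood inside a single clique, while a cut-vertex $v$ has neighbors in at least two distinct blocks and those neighbors cannot be adjacent (otherwise the edge plus a path through $v$ would create a biconnected subgraph merging the two blocks). The third step is simply to count the marked vertices and compare the count to $k$.

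Each of these steps runs in linear time: the block-cut decomposition is $\mathcal{O}(n+m)$, cut-vertex marking is produced by the same DFS, and counting is $\mathcal{O}(n)$. This gives the claimed linear-time algorithm. The only mild subtlety is justifying that ``non-cut-vertex'' coincides with ``simplicial'' in block graphs; this is immediate from the clique structure of blocks, so no real obstacle arises beyond invoking Theorem~\ref{teorema_block} and a standard linear-time biconnectivity routine.
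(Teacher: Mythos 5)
Your proposal is correct and follows essentially the same route as the paper: run a linear-time DFS/biconnectivity computation to identify the cut-vertices, observe that in a block graph the non-cut-vertices are exactly the simplicial vertices, and by Theorem~\ref{teorema_block} compare their number with $k$. The extra justification you give for the equivalence between simplicial and non-cut vertices is a welcome detail the paper leaves implicit, but it does not change the approach.
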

    \begin{proof}
        The algorithm consists in running a depth first search to find the set $A$ of cut-vertices of the graph. Then, return $V \setminus A$ as solution.
    \end{proof}
    
    \begin{corollary}
        There is a linear-time algorithm that solves the \textsc{strong geodetic recognition} problem for block graphs.
    \end{corollary}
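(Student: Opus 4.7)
My plan is to prove, as a characterization, that for any block graph $G$ and any $S \subseteq V(G)$, the set $S$ is strong geodetic in $G$ if and only if $S$ contains every simplicial vertex of $G$. This is essentially a strengthening of Theorem~\ref{teorema_block}, which gives one direction (the minimum strong geodetic set is exactly the set of simplicial vertices); the extra ingredient needed for SGR is that \emph{adding} cut-vertices to the simplicial set cannot destroy the strong-geodetic property, thanks to the rigid block structure.

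For necessity, I would note (as already used in the proof of Proposition~\ref{prop:SGRtoSG}) that no simplicial vertex can be internal to a geodesic: if $v$ were internal between $a$ and $b$ on a shortest path, then $a,b \in N(v)$ would be adjacent, giving a strictly shorter path. Thus any simplicial vertex absent from $S$ would remain uncovered, forcing the inclusion $\mathrm{Simp}(G)\subseteq S$. For sufficiency, the key structural fact I would invoke is that the geodesic between any two vertices of a block graph is \emph{unique}: within a block it is the connecting edge, and between blocks it follows the unique path in the cut-tree of Definition~\ref{def_block}. Hence the shortest-path assignment demanded by SGR is forced, with no uniqueness obstruction coming from pairs of $S$. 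Assuming $\mathrm{Simp}(G)\subseteq S$ and $v\notin S$, the vertex $v$ must be a cut-vertex, since in a block graph every vertex is either simplicial or a cut-vertex; then the argument used inside the proof of Theorem~\ref{teorema_block} already exhibits two simplicial vertices $s_1,s_2$ (one in each of two components of $T[V'\setminus\{v\}]$) whose unique $s_1,s_2$-geodesic passes through $v$, and these vertices lie in $S$.

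The algorithm then reduces to: (i) running a DFS in $O(n+m)$ to identify the cut-vertices of $G$; (ii) taking $\mathrm{Simp}(G) = V \setminus \{\text{cut-vertices}\}$; and (iii) checking the inclusion $\mathrm{Simp}(G)\subseteq S$ in $O(n)$ time. The answer to SGR is ``yes'' exactly when this inclusion holds, yielding a linear-time procedure. The only conceptual step beyond Theorem~\ref{teorema_block} is the uniqueness of geodesics in block graphs, which is the main point to nail down; however, this follows immediately from the tree-of-cliques description of block graphs, so I do not anticipate any real obstacle, and the corollary will drop out.
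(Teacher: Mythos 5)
Your proposal is correct and follows essentially the same route as the paper: both reduce SGR on block graphs to checking the inclusion $\mathrm{Simp}(G)\subseteq S$, with $\mathrm{Simp}(G)$ computed as the non-cut-vertices via a linear-time DFS, relying on Theorem~\ref{teorema_block}. The only difference is that you re-derive sufficiency explicitly (unique geodesics plus the cut-tree argument already inside Theorem~\ref{teorema_block}), whereas the paper simply invokes that any superset of the simplicial set is strong geodetic; both justifications are valid.
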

    \begin{proof}
        Given any set $X \subseteq V$, if $S \subseteq X$ then $X$ is a strong geodetic set, otherwise~$X$ is not a strong geodetic set.
    \end{proof}
    
    We also obtained a polynomial-time algorithm for~SG on \textit{cacti graphs}.
    A cactus graph is a connected one in which every edge belongs to at most one simple cycle.
    In the proof, we consider a down-top approach on a \textit{cut tree}~$T$ representation of the given cactus~$G$.
    The main idea is to guarantee that all vertices of each biconnected component are covered, and this can be achieved considering one at a time, after dealing with some technicalities. The procedure constructs a minimum strong geodetic set by adding the minimum amount of vertices required for each biconnected component.
    Including the simplicial vertices, for each leaf~$\ell$ of~$T$, we show that at most two vertices of~$\ell$ are required to compose the optimal solution, where we test the parity of the cycle of~$\ell$.
    Moreover, for each cycle~$C$ represented by an internal vertex~$v$ of~$T$ we consider the distance between the farther cut vertices in~$C$, where we prove that no vertices between such vertices are in the optimal solution and at most one vertex of~$C$ must be included, depending on the parity of the size of~$C$.
%
%
    
    We will first illustrate a pre-processing procedure. The procedure receives a cactus graph and its cut-tree. We will consider that the received cut-tree has at least two nodes, as otherwise the algorithm simple consists in solving the SG for a cycle or an~edge.

\vbox{
\begin{enumerate}
    \item 
    \textbf{Input:} A cactus graph $G = (V,E)$ and its cut-tree $T = (V', E')$.
    \item
    Initialize $S$ as an empty set.
    \item
    For each leaf $\ell$ in $ T $ do:
    \subitem
    - If $\ell$ corresponds to an edge $uv$ of $G$ (a biconnected component that is an edge), then add its simplicial vertex to $S$.
    \subitem
    - If $\ell$ corresponds to an even cycle $C$ of length $l$ whose cut-vertex is $a$, add a vertex $v \in C$ to $S$ such that $D(a,v) = \dfrac{l}{2}$.
    \subitem
    - If $\ell$ corresponds to an odd cycle $C$ of length $ l $ whose cut-vertex is $a$, add two vertices $u, v \in C$ to $S$, such that $D(a,u) = D(a,v) = \left\lfloor \dfrac{l}{2} \right\rfloor$.
    \item
    Finish pre-processing.
\end{enumerate}
}

Having finished pre-processing, we now define how to process each biconnected component (block) associated to internal vertices of $T$. Let $t$ be an internal vertex of~$ T $, if $t$ represents an odd cycle $C$ of length $ l $ do: Define $A$ as the set of cut-vertices of~$G$ present in $C$. Consider $x_1, x_k \in A$ with $ x_1 \neq x_k $ and $P = (x_1, x_2, \dotsc, x_{k-1}, x_k)$ as the longest path between $x_1 $ and $x_k$ in $ C $. Let $j = \left\lfloor \dfrac{1 + k}{2} \right\rfloor$ and $v = x_j$. If $\bigcup_{p,q \in A} p(p,q) \neq V(C)$ add $v$ to $S$, otherwise, proceed to the next block. Here, $p(p,q)$ denotes the unique shortest path between $p$ and $q$ in $C$.

If $t$ represents an even cycle $C$ of length $l$ in $G$ do: Define $A$ as the set of cut-vertices of $G$ contained in $C$. If there are $a_1, a_2 \in A$ such that $ D(a_1, a_2) = \frac{l}{2}$ proceed to the next block, otherwise, if $\bigcup_{p,q \in A} p(p,q) \neq V(C)$ add a vertex to $S$ the same way as described for odd cycles at the previous paragraph.

After processing all blocks, if $|S| \geq 3$, then $S$ is a minimum strong geodetic set of $G$ and the algorithm finishes. Otherwise, verify whether $G$ contains any block that is an even cycle, if so, add an arbitrary vertex of $G$ to $S$ and finish. Otherwise, return~$S$ and finish.

\begin{theorem}\label{thm:cacti}
The algorithm presented above is correct.
\end{theorem}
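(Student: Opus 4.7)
The plan is to split the proof into two claims: (C1) the returned set $S$ is a strong geodetic set of $G$, and (C2) no strong geodetic set of $G$ has cardinality strictly smaller than $|S|$. Throughout, I would use the cut-tree $T$ to decompose $G$ into blocks meeting at cut-vertices; since in a cactus any two vertices lying in distinct blocks are connected by a unique path-of-blocks, a shortest path between them is forced block-by-block and, inside each cycle block, has at most the two arc-candidates of minimum length.

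For (C1), I would construct a shortest path assignment $I(S)$ by processing blocks in a bottom-up traversal of $T$. The simplicial vertices placed at edge-leaves, together with the one (even case) or two (odd case) vertices placed at cycle-leaves, ensure that from each leaf block at least one $S$-vertex sends outgoing geodesics through the block's unique cut-vertex to $S$-vertices in the rest of the cactus; the portions of these geodesics lying inside the leaf then cover all of its vertices, exploiting the two arc-choices of even cycles and the two near-antipodes of odd cycles. For an internal cycle block $C$ whose cut-vertices are $A=\{a_1,\dots,a_r\}$, the cross-block geodesics of $I(S)$ that enter and leave $C$ at distinct cut-vertices induce shortest paths between pairs in $A$, covering all vertices that lie on some $a_i,a_j$-geodesic inside $C$. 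The algorithm inserts a middle vertex into $S$ exactly when the shortest paths in $C$ between pairs of cut-vertices do not jointly cover $V(C)$; pairing this middle vertex with an $S$-vertex in another block then produces a geodesic through $C$ that covers the remaining vertices.

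For (C2), I would start from the well-known fact that every simplicial vertex of $G$ must belong to any strong geodetic set. For a leaf cycle $C$ of length $l$ with cut-vertex $a$, any vertex of $C$ at distance greater than $\lfloor l/2 \rfloor$ from $a$ cannot lie on a geodesic of $G$ exiting $C$, because such a geodesic must pass through $a$. A short case analysis on the parity of $l$ then shows that at least one (even $l$) or two (odd $l$) vertices of $C\setminus\{a\}$ must appear in $S$, matching exactly what the algorithm adds. For internal cycles, an analogous antipodality argument shows that whenever the union of $a_i,a_j$-geodesics does not already cover $C$, at least one extra vertex of $C$ is forced into $S$. The post-processing guard $|S|\geq 3$ handles the degenerate situation in which the whole cactus reduces to essentially one cycle with at most a pendant structure attached, where classical lower bounds for the strong geodetic number of a cycle pinpoint the exact number of vertices needed.

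The subtlest point, and the reason both for the $|S|\geq 3$ guard and for the antipodal check in internal even cycles, is the uniqueness requirement in the strong geodetic definition: two $S$-vertices at distance $l/2$ in an even cycle admit two shortest paths between them but may contribute only one of these to $I(S)$. Hence vertices that a naive counting would consider doubly covered are in fact only singly coverable, and the construction in (C1) must repeatedly invoke the availability of a third $S$-vertex in order to switch between the two geodesic halves of an even cycle. I would address this main obstacle by showing that once the guard has fired, every non-degenerate instance admits at least two cross-block geodesics through each even-cycle block on opposing arcs, so that every vertex of every cycle is covered exactly once in the constructed $I(S)$.
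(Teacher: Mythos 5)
Your proposal follows essentially the same route as the paper's own proof: a block-by-block argument over the cut-tree in which the forced leaf vertices (simplicial vertices, one near-antipode for even leaf cycles, two for odd ones) are covered via cross-block geodesics exploiting a third leaf for the two arcs of an even cycle, internal cycles are handled by the union of cut-vertex geodesics plus at most one added middle vertex, and minimality is argued by per-block lower bounds. Your explicit treatment of the $|S|\geq 3$ guard and of the uniqueness subtlety for antipodal pairs in even cycles is, if anything, slightly more careful than the paper's write-up, which simply assumes the cut-tree has at least three leaves.
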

\begin{proof}
For now consider that the algorithm receives as input a cactus graph $G = (V, E)$ whose cut-tree $T = (V', E')$ contains at least 3 leaves. We will first show that the returned set $S$ is a strong geodetic set. From the description of the algorithm we know that we will have at least one vertex in $S$ for each leaf of $T$. Let $F$ be the set of leaves of $ T $ and $ f_1 \in F $ a leaf that represents an edge $e = ux$ in $G$ whose simplicial vertex is $u$. And let $f_2$ be another leaf of $T$, with $v \in S \cap V(F_2)$, finally note that any path between $u$ and $v$ contains $x$, covering all vertices of $e$.

Now let $f_1$ be a leaf of $T$ that represents an even cycle $C$ of length $l$. By the algorithm, we add to $S$ a vertex $v$ whose distance to the cycle's cut-vertex $a$ is $\dfrac{l}{2}$, thus, we have two distinct paths between $ v $ and $ a $ with length $\dfrac{l}{2}$: $c_1$ and $c_2$. Let $f_2$ and $f_3$ be two other leaves of $T$, that exist by hypothesis. Any shortest path that goes from~$v$ to the cited leaves contains $a$, so we set the shortest path between $f_1$ and $f_2$ to pass through $c_1$ and the shortest path between $f_1$ and $f_3$ to pass through~$c_2$, covering all vertices of $C$.

Now let $f_1$ be a leaf of $T$ representing an odd cycle $ C $ of length $ l $. By the algorithm, we add two vertices to $ S $: $ v_1 $ and $ v_2 $, such that their distances to the cycle's cut-vertex~$ a $ are the same: $\left\lfloor \dfrac{l}{2} \right\rfloor$. Observe that: $p(v_1,a) \cup p(v_2,a) =  V(C)$, thus, by choosing any shortest path from $ v_1 $ to another vertex $v_3 \in S \cap f_3$, where $f_3$ is another leaf of $T$, and from $ v_2 $ to the same leaf $ f_3 $ all vertices of $C$ will be covered.

Let $t \in T$ be an internal vertex of $ T $ that represents an edge $e = uv$ of $G$. Let $ C_1 $ and $ C_2 $ be connected components of $T - \{t\}$. In addition, consider $ f_1 $ to be a leaf of $ C_1 $ and $ f_2 $ a leaf of $ C_2 $, now note that any path between $x \in S \cap V(f_1)$ and $y \in S \cap V(f_2)$ contains $ u $ and $ v $. Therefore, all vertices of $e$ will be covered.

Let $t \in T$ be an internal vertex of $ T $ which represents a  cycle $ C $ of size $ l $ at $G$. Let $ A $ denote the set of cut-vertices of $ C $, the algorithm verifies whether $\bigcup_{p,q \in A} p(p,q) = V(C)$, we claim that if that holds, then all vertices of $ C $ are covered by shortest paths between vertices in $ S $. In fact, let $ v $ be any vertex of $ C $, assuming $\bigcup_{p,q \in A} p(p,q) = V(C)$, there are vertices $a_1, a_2 \in A$ such that~$v \in p(a_1,a_2)$. Now, let $ C_1 $ and $ C_2 $ be the two connected components of $G - \{a_1\}$ such that $ C_1 $ is the one that has no vertex of $C$. Analogously, let $ C_3 $ and $ C_4 $ be the two connected components of $G - \{a_2\}$ such that $ C_3 $ is the one that has no vertex in $C$. Let $x \in C_1 \cap S$ and $y \in C_3 \cap S$, these vertices exist because the algorithm guarantees that every leaf of $T$ has a vertex in $S$, observe that any shortest path between $ x $ and $ y $ contains $ v $. Nevertheless, if $\bigcup_{p,q \in A} p(p,q) \neq V(C)$ and there are no vertices $i, j \in A$ such that $D(i,j) = \dfrac{l}{2}$, then the algorithm adds a vertex $ v \in V(C) $ to $ S $ so that there exists vertices $ a_1, a_2 \in A $ such that $D(a_1,v) - D(a_2,v) \leq 1$. Thus, it holds that $p(a_1,a_2) \cup p(a_1,v) \cup p(a_2,v) = V(C)$, having all vertices of $C$ covered. Finally, if there are vertices~$i, j \in A$ such that $D(i,j) = \dfrac{l}{2}$, then it is possible to cover all vertices of $C$, since $T$ has at least 3 leaves.

Now, it remains to argue that the returned set $S$ is minimum. Observe that odd cycles situated at leafs of $T$ must have at least 2 of its vertices in $S$ and even cycles situated at leafs of $T$ must have at least 1 of its vertices in $S$. Edges located at leaves of $T$ must have its simplicial vertex added to $S$. Now, observe that for internal vertices of $T$ we add to $S$ the minimum amount of vertices needed, that is, for edges we add none, for cycles that can be covered by shortest paths between its cut-vertices we add none, and for cycles that cannot be covered that way we add a vertex to $S$, which is the minimum required.
\end{proof}

\begin{corollary}\label{thm:rec_cacto_P}
There is a polynomial-time algorithm that solves the \textsc{strong geodetic recognition} problem for cacti graphs.
\end{corollary}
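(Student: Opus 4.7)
The plan is to derive the corollary from Proposition~\ref{prop:SGRtoSG} together with the algorithm of Theorem~\ref{thm:cacti}. Given an instance $(G, S)$ of SGR where $G$ is a cactus, I would first construct the auxiliary graph $G'$ of Proposition~\ref{prop:SGRtoSG}, that is, the graph obtained by attaching a pendant vertex $x_v$ to every $v \in S$. The key preliminary observation is that $G'$ is still a cactus: each new edge $v x_v$ forms its own block and lies on no cycle, so no edge of $G'$ belongs to more than one simple cycle. Hence the polynomial-time algorithm of Theorem~\ref{thm:cacti} applies to $G'$ without modification.

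The algorithm I propose for SGR is then: build $G'$, invoke Theorem~\ref{thm:cacti} on $G'$ to produce a minimum strong geodetic set $S^{\ast}$, and accept if and only if $|S^{\ast}| = |S|$. Writing $P = \{x_v : v \in S\}$, each pendant $x_v$ is simplicial in $G'$, so $P$ is contained in every strong geodetic set of $G'$, giving $|S^{\ast}| \geq |P| = |S|$ automatically. For the correctness argument, if $|S^{\ast}| = |S|$ then $P \subseteq S^{\ast}$ combined with $|P| = |S^{\ast}|$ forces $S^{\ast} = P$, so $P$ itself is a strong geodetic set of $G'$, and the reverse direction of Proposition~\ref{prop:SGRtoSG} yields that $S$ is a strong geodetic set of $G$. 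Conversely, if $S$ is strong geodetic in $G$, the forward direction of Proposition~\ref{prop:SGRtoSG} makes $P$ a strong geodetic set of $G'$ of size $|S|$, and the minimality of $S^{\ast}$ then forces $|S^{\ast}| = |S|$.

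The main obstacle is little more than verifying the preconditions of the two borrowed results, namely that $G'$ remains a cactus (addressed above) and that Theorem~\ref{thm:cacti} indeed returns a \emph{minimum} strong geodetic set so that the size comparison is decisive. Since each of the three algorithmic steps, constructing $G'$, running the Theorem~\ref{thm:cacti} procedure, and comparing cardinalities, runs in polynomial time, the overall procedure is polynomial and establishes the corollary.
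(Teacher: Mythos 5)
Your proposal is correct and follows essentially the same route as the paper: apply the reduction of Proposition~\ref{prop:SGRtoSG}, observe that attaching pendant vertices keeps the graph a cactus, and then decide the resulting SG instance with the polynomial-time algorithm of Theorem~\ref{thm:cacti}. Your additional details (pendants are simplicial, hence contained in any strong geodetic set, so comparing $|S^{\ast}|$ with $|S|$ is decisive) merely spell out the correctness that the paper leaves implicit.
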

\begin{proof}
In order to verify that a given vertex set~$X$ of a cactus graph~$G = (V, E)$ is a strong geodetic set, we utilize the reduction presented in Proposition~\ref{prop:SGRtoSG}. If the reduction is applied to $G$, then a cactus graph $G'= (V',E')$ arises, this occurs because the reduction only adds one-degree vertices to the graph. Thus, it is possible to solve the SGR for cacti graphs by solving the SG at a related cactus. Finally, since it is possible to solve the~SG for cacti graphs in polynomial time, then~SGR for cacti graphs is also computable in polynomial time.
\end{proof}

    Now we present polynomial-time algorithms to SGR restricted to graphs of diameter~2 and restricted to \textit{split graphs}, which are those whose vertex set can be partitioned into a clique and an independent set. Both proofs follow a similar approach.

    \begin{theorem}\label{teorema_rec_diameter}
        Let~$G = (V,E)$ be a connected graph of diameter~2 and consider~$S \subseteq V$.
        There exists an~$\mathcal{O}(|S|^2 \, \cdot \, |V \setminus S|)$-time algorithm that decides whether~$S$ is a strong geodetic set of $G$.
    \end{theorem}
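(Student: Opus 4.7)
The plan is to exploit the diameter-$2$ hypothesis to characterise which shortest paths can actually play a role in a witness. Any pair $u,v \in S$ satisfies $d(u,v)\in\{1,2\}$. If $d(u,v)=1$ the only $u,v$-shortest path is the edge $uv$, which covers no vertex of $V\setminus S$ and is therefore irrelevant. If $d(u,v)=2$, every $u,v$-shortest path has length two and is of the form $u\text{-}x\text{-}v$ for some $x \in N(u)\cap N(v)$; choosing a shortest path between $u$ and $v$ is therefore equivalent to choosing one common neighbour of $u$ and $v$. Hence a valid shortest path assignment witnessing that $S$ is strong geodetic is exactly a function $f$ that picks, for every pair $\{u,v\}\in\binom{S}{2}$ with $d(u,v)=2$, one element of $N(u)\cap N(v)$, subject to the requirement that every vertex of $V\setminus S$ lies in the image of $f$.

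From there I would cast the question as a bipartite matching. Construct the bipartite graph $B$ whose left part is the set $\mathcal{P}$ of distance-two pairs drawn from $\binom{S}{2}$ and whose right part is $V\setminus S$, with an edge between $\{u,v\}\in\mathcal{P}$ and $x\in V\setminus S$ exactly when $x\in N(u)\cap N(v)$. I claim $S$ is a strong geodetic set of $G$ if and only if $B$ admits a matching that saturates $V\setminus S$. One direction is immediate: a valid $f$ restricted to pairs whose image lies in $V\setminus S$ is an injection from a set containing $V\setminus S$ into $\mathcal{P}$, hence a saturating matching. Conversely, a saturating matching determines $f$ on the matched pairs, and the remaining distance-two pairs (and all distance-one pairs) can receive any shortest path they like, since the cover requirement is already met.

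Operationally, the algorithm first enumerates the pairs of $S$, checks distance (adjacency suffices since $\mathrm{diam}(G)=2$) and, for each pair at distance exactly $2$, scans $V\setminus S$ once to record the incident edges of $B$. Each pair contributes $\mathcal{O}(|V\setminus S|)$ work and there are $\mathcal{O}(|S|^2)$ pairs, so building $B$ together with the adjacency-based distance check fits in $\mathcal{O}(|S|^2\cdot|V\setminus S|)$. After this preprocessing, testing whether $B$ has a matching saturating $V\setminus S$ is a standard bipartite matching computation on a graph with $\mathcal{O}(|S|^2+|V\setminus S|)$ vertices and $\mathcal{O}(|S|^2\cdot|V\setminus S|)$ edges, which I would use as a black box.

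The conceptual step to be careful with is the equivalence with saturating matching: one must verify that only distance-two pairs can ever cover a vertex of $V\setminus S$ (true in diameter $2$, since distance-one pairs give a length-one path with no interior), that the interior vertex of any such path lies in $N(u)\cap N(v)$ (immediate from length $2$), and that restricting attention to right-hand vertices in $V\setminus S$ (rather than common neighbours in $S$) is harmless because common neighbours that already lie in $S$ impose no covering obligation. The main obstacle I expect is making the complexity bookkeeping clean, in particular justifying that the matching phase can be absorbed into the stated $\mathcal{O}(|S|^2\cdot|V\setminus S|)$ budget; this is where the diameter-$2$ structure, which keeps both the left part ($\mathcal{P}$) and the edge count small, is doing the real work.
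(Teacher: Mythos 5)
Your proposal is correct and follows essentially the same route as the paper: both construct the auxiliary bipartite graph whose left side is the pairs of $S$ (your restriction to distance-two pairs is only a cosmetic trimming of isolated left vertices), whose right side is $V\setminus S$, with an edge exactly when the vertex is the interior of a length-two shortest path between the pair, and both reduce the question to whether a matching saturating $V\setminus S$ exists. The complexity worry you flag is handled in the paper by invoking a cited maximum-matching algorithm running in time linear in the number of edges of the auxiliary graph, which is $\mathcal{O}(|S|^2\cdot|V\setminus S|)$.
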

    \begin{proof}
        At first, we construct an auxiliary bipartite graph~$H=(A,B,E')$, with parts~$A = \{v_{i,j} \mid i,j \in S \wedge i \neq j \}$ and~$B = V \setminus S$. In addition, there is an edge between~$v_{i,j} \in A$ and~$y \in B$ if and only if~$(i,y,j)$ is an $i,j$-shortest path in~$G$.
        
        Now, we compute a maximum matching~$M$ of~$H$.
        This can be done in time~$\mathcal{O}\left(|E'|\right)$~\cite{alom2010finding}.
        Observe that~$|A| \leq |S|^2$ and~$|B| = |V \setminus S|$, then it is possible to compute such a matching in time $\mathcal{O}\left(|S|^2 \cdot |V \setminus S|\right)$.
        Finally, if~$|M| = |B|$, then output YES, otherwise, output NO.
        
        In order to prove the correctness of the algorithm we prove that~$M$ has size~$|B|$ if and only if~$S$ is a strong geodetic set of~$G$.
        Assume that~$|M| = |B|$, then for each vertex~$b \in B$ there is an edge~$v_{i,j}b \in M$ and we use the~$(i,b,j)$ shortest path to cover $b$.
        Moreover, since~$M$ is a matching, for each pair of vertices~$i,j \in S$ it will be assigned a unique $i,j$-shortest path in~$I(S)$.
        Finally, if there are still shortest paths to be assigned in~$I(S)$, any choice of shortest paths will guarantee a valid strong geodetic set~$S$.
        
        For the converse, assume that~$S$ is a strong geodetic set of~$G$, then there is a shortest path choice~$I(S)$ that covers all vertices in~$V \setminus S$.
        Let~$u \in V \setminus S$ and let~$M$ be an empty set.
        It holds that at least one~$p,q$-shortest path in~$I(S)$ covers~$u$, we add the edge~$v_{p,q}u$ to~$M$, observe that~$v_{p,q}u \in E'$, by the definition of~$H$.
        Repeat this process for every~$u \in V \setminus S$.
        It results that~$M$ is a maximum matching of~$H$, with~$|M| = |B|$.
        In fact, note that~$M$ has exactly one edge incident to each vertex in~$B$ and at most one edge in~$M$ is incident to a vertex in~$A$, given that there is a unique shortest path in~$I(S)$ for each vertex pair of~$S$.
    \end{proof}

    Observe that this result illustrates an interesting discrepancy between SG and~SGR: SG restricted to 2-diameter graphs is \NP-complete and~SGR restricted to 2-diameter graphs can be solved in polynomial-time.

    \begin{theorem}\label{teorema_rec_split}
        Let $G = (V,E)$ be a connected split graph and consider~$S \subseteq V$.
        There exists an~$\mathcal{O}\left(|S|^2 \cdot |V \setminus S|\right)$-time algorithm that decides whether~$S$ is a strong geodetic set of~$G$ (SGR).
    \end{theorem}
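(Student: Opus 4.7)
The plan is to adapt the bipartite-matching strategy of Theorem~\ref{teorema_rec_diameter}, accounting for the fact that a connected split graph may have diameter up to~$3$.

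First, observe that every vertex $v \in I$ has $N(v) \subseteq K$, and $K$ is a clique, so $v$ is simplicial and must belong to every strong geodetic set. Thus, if $I \not\subseteq S$ we immediately return NO; otherwise $V \setminus S \subseteq K$, so only clique vertices remain to be covered. Next I would classify the relevant shortest paths. A length-$1$ path covers nothing; a length-$2$ path $(i,y,j)$ behaves exactly as in the diameter-$2$ setting and covers a single internal vertex. The only length-$3$ shortest paths connect a pair $i,j \in I$ with no common neighbor in $K$, and any such path has the form $(i,k_1,k_2,j)$ with $k_1 \in (N(i) \cap K) \setminus N(j)$ and $k_2 \in (N(j) \cap K) \setminus N(i)$. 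Because $K$ is a clique, every pair $(k_1,k_2)$ from these two disjoint sets yields a valid $i,j$-shortest path, so the two internal positions of a length-$3$ path may be selected independently.

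The main step is to build a bipartite auxiliary graph $H=(A,B,E')$ mirroring the one in Theorem~\ref{teorema_rec_diameter}. Set $B = V \setminus S$ and take $A$ to contain: (i) one node $v_{i,j}$ for each pair $i,j \in S$ with $d(i,j)=2$, adjacent to every vertex $y$ with $(i,y,j)$ a shortest path; and (ii) two nodes $v^{1}_{i,j}$ and $v^{2}_{i,j}$ for each pair $i,j \in S$ (necessarily in $I$) with $d(i,j)=3$, adjacent respectively to the vertices of $(N(i) \cap K) \setminus N(j)$ and $(N(j) \cap K) \setminus N(i)$. Then compute a maximum matching $M$ of $H$ and accept iff $M$ saturates $B$.

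The delicate part of the correctness proof, and the main obstacle, is converting a saturating matching into an actual path assignment. For each distance-$3$ pair $(i,j)$ with at most one of $v^{1}_{i,j}, v^{2}_{i,j}$ matched, I would complete the chosen length-$3$ path by picking an arbitrary vertex from the opposite side, which always exists since $d(i,j)=3$ forces both $(N(i) \cap K)\setminus N(j)$ and $(N(j) \cap K)\setminus N(i)$ to be non-empty; the disjointness of these sets guarantees that the two split slots never compete for the same element of $B$, and any incidental extra coverage a completed path provides is redundant and harmless. The converse direction is routine: a valid shortest path assignment directly induces a saturating matching by assigning each internal vertex of a chosen path to the corresponding slot in $A$. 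Since $|A|=\mathcal{O}(|S|^2)$, $|B|=|V \setminus S|$, and $|E'|=\mathcal{O}(|S|^2 \cdot |V \setminus S|)$, the overall running time is dominated by the bipartite matching and matches the claimed $\mathcal{O}(|S|^2 \cdot |V \setminus S|)$ bound.
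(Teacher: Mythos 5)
Your proposal is correct and takes essentially the same approach as the paper: an auxiliary bipartite graph with one slot vertex per distance-$2$ pair of $S$ and two independent slot vertices per distance-$3$ pair, followed by a maximum matching that must saturate $V \setminus S$. Your added preprocessing (rejecting when the independent side is not contained in $S$) and your explicit completion of distance-$3$ paths when only one slot is matched are harmless refinements of the paper's argument rather than a different method.
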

    \begin{proof}
        We propose a construction that follows the same approach of Theorem~\ref{teorema_rec_diameter}.
        Create an auxiliary bipartite graph~$H=(A,B,E')$, with~$B = V \setminus S$.
        Now it remains to define~$A$ and~$E'$: for each vertex pair~$(i,j)$, with~$i,j \in S$ and~$i \neq j$~do:
        \begin{itemize}
            \item If $D(i,j) \neq 3$, add a vertex $v_{i,j}$ to $A$. In addition, add the edges $v_{i,j}k$ for all~$k \in B$ such that $(i,k,j)$ is a shortest path in $G$.
            \item If $D(i,j) = 3$, add the vertices $v_{i,j}$ and $\overline{v_{i,j}}$ to $A$. Then, add the edges $v_{i,j}k$ for all $k \in N(i) \cap B$, and add the edges $\overline{v_{i,j}}k'$ for all $k' \in N(j) \cap B$.
        \end{itemize}
        Now, we compute a maximum matching $M$ of $H$ in time $\mathcal{O}(|S|^2 \cdot |V \setminus S|)$, the time complexity is derived similarly as in Theorem~\ref{teorema_rec_diameter}.
        Finally, if $|M| = |B|$ output~YES, otherwise, output NO.
        
        In order to prove the correctness of the algorithm we prove that the maximum matching $M$ of $H$ has size $|B|$ if and only if $S$ is a strong geodetic set of $G$.
        Assume that $|M| = |B|$, then, for each vertex $b \in B$ there is an edge $ab \in M$, with $a \in A$. If $a = v_{i,j}$, with $D(i,j) = 2$, then assign the $(i,b,j)$ shortest path to $I(S)$.
        On the other hand, if $a = v_{i,j}$ (without loss of generality), and $D(i,j) = 3$, we set $b$ to be on the $i,j$-shortest path, and the other vertex present on the $i,j$-shortest path will be the vertex in $B$ that is an endpoint of the edge matching $\overline{v_{i,j}}$.
        Finally, since $M$ is a matching, for each pair of vertices $i,j \in S$ it will be assigned a unique $i,j$-shortest path in $I(S)$.
        Therefore, $I(S)$ defines a strong geodetic set $S$.
        
        For the converse, assume that $S$ is a strong geodetic set for $G$ defined by~$I(S)$.
        Let $M$ be an empty set.
        Then, for each $i,j$-shortest path $(i,k,j)$ of size 2 in $I(S)$, with $k \in B$, add the edge~$v_{i,j}k$ in~$M$.
        And for each $i,j$-shortest path $(i,k,l,j)$ of size 3 in $I(S)$, add the edges $v_{i,j}k$, if $k \in B$, and~$\overline{v_{i,j}}l$, if $l \in B$.
        Now, remove edges of $M$ until there is exactly one edge in $M$ incident to each vertex in~$B$.
        Finally, observe that $M$ is a maximum matching of $H$, with $|M| = |B|$.
    \end{proof}

\section{Further Research}
\label{sec:conclusions}

    Our results show that the complexity of the decision version of the {\sc Geodetic} problem and SGR are quite similar.
    Both are \NP-complete for co-bipartite and bipartite graphs (remember that {\sc Geodetic Number} is \NP-hard for chordal bipartite graphs~\cite{dourado2010some}), while they are tractable on split graphs and we strongly believe the same on cacti graphs.
    Moreover, both are intractable for graphs of bounded maximum degree.
    The first problem we leave is about the complexity of~SGR on subcubic graphs, while its intractability is known for {\sc Geodetic Number}, and for bipartite graphs of maximum degree~4  on SGR, as we have proved.
    However, the complexities differ for graphs of diameter~2, being \NP-hard on {\sc Geodetic Number} and polynomial on~SGR.
    So, what is the complexity of~SGR for graphs of diameter~3?
    On the opposite way, as {\sc Geodetic Number} is tractable on cographs, would be interesting to prove the same for~SG, or even~SGR.
    Another question left is the complexity of~SG for split graphs, since it is tractable for the other two problems.
    Moreover, with respect to parameterized complexity, is~SGR in \FPT~when parameterized only by the size of the given vertex set?
    

\bibliographystyle{abbrv}
\bibliography{ref}


\end{document}